\newcommand{\instancename}[1]{\ensuremath{\mathsf{#1}}}
\newcommand{\select}[1][]{\ensuremath{\operatorname{select}_{#1}}}
\newcommand{\rank}[1][]{\ensuremath{\operatorname{rank}_{#1}}}
\DeclareMathAlphabet{\mathup}{OT1}{\familydefault}{m}{n}
\newcommand{\bsq}[1]{\text{\lq}{#1}\text{\rq}} \newcommand{\menge}[1]{\left\{#1\right\}} \newcommand{\upgauss}[1]{\left\lceil#1\right\rceil} \newcommand{\tuple}[1]{\left(#1\right)} \newcommand{\abs}[1]{\ensuremath{\left|#1\right|}} \newcommand{\intWort}[1]{\emph{\textbf{#1}}}
\newcommand{\UnaryOperator}[2][]{\ifx&#1&\ensuremath{\mathop{}\mathopen{}#2\mathopen{}}\else \ensuremath{\mathop{}\mathopen{}#2\mathopen{}\left(#1\right)}\fi }
\newcommand{\Oh}[1]{\UnaryOperator[#1]{\mathcal{O}}}
\newcommand{\oh}[1]{\UnaryOperator[#1]{o}}
\newcommand{\select}[1][]{\ensuremath{\operatorname{select}_{#1}}}
\newcommand{\rank}[1][]{\ensuremath{\operatorname{rank}_{#1}}}
\newcommand{\instancename}[1]{\ensuremath{\mathsf{#1}}} \newcommand{\block}[1]{\subparagraph*{#1}}
\DeclareRobustCommand\scases[2]{\ifmmode #1\begin{cases}#2\end{cases}
	\else \[#1\begin{cases}#2\end{cases}\]
	\fi }
\DeclareRobustCommand\sitemize[1]{\begin{itemize}#1\end{itemize}}
\newcommand{\argmax}{\operatorname{argmax}}
\newcommand{\idbb}[1]{\ensuremath{\mathbb{#1}}}
\newcommand{\N}{\idbb{N}}
\pgfmathsetcount{\Length}{3}
\tikzset{>={Latex[width=2mm,length=1mm]}}
\newcommand{\Referenz}[8][]{\RefN{#2}{#3}{#4}{#5}{#6}{#7}{#8}{#1}}
\newcommand{\RefN}[8]{\begin{pgfonlayer}{bg} 
	\node [fit=(warray-1-#3.south) (warray-1-#4.north), inner sep=0, inner xsep=5, fill=#7,opacity=0.4, rounded corners=4pt] (src) {};
\end{pgfonlayer}
\pgfmathsetcount{\Length}{#2-#1+1}

\ifthenelse{\equal{#8}{l}}{\draw [->, rounded corners, color=#7, opacity=0.5] (warray-1-#3.#6) -- +(0,#5)-| node[above]{\texttt{(#1,\the\Length)}} (warray-1-#1.#6);
}{\draw [->, rounded corners, color=#7, opacity=0.5] (warray-1-#3.#6) -- +(0,#5) node[above] {\texttt{(#1,\the\Length)}} -| (warray-1-#1.#6);
}

}
\newcommand{\NoRef}[2]{\begin{pgfonlayer}{bg} 
		\node [fit=(warray-1-#1.south) (warray-1-#2.north), inner xsep=5, inner ysep=0, fill=gray,opacity=0.3, rounded corners=4pt] (src) {};
\end{pgfonlayer}
}
\tikzstyle{array} = [matrix of nodes,font=\ttfamily, column sep=0.5\pgflinewidth, row sep=0.5mm, nodes in empty cells,
\newcommand{\JOP}[2][]{#2}
	\newcommand{\J}[2][]{#2}
	\newcommand{\JO}[2][]{#2}
	\def\Jvar{}
\newcommand{\J}[2][]{\appto\Jvar{#1#2}}
	\newcommand{\JO}[2][]{#1}
\newcolumntype{R}[1]{>{\raggedleft\arraybackslash}p{#1}}
	\definecolor{solarizedBase00}{HTML}{657b83}
	\definecolor{solarizedBase3}{HTML}{FDF6E3}
	\definecolor{solarizedYellow}{HTML}{B58900}
	\definecolor{solarizedRed}{HTML}{DC322F}
	\definecolor{solarizedMagenta}{HTML}{D33682}
	\definecolor{solarizedViolet}{HTML}{6C71C4}
	\definecolor{solarizedBlue}{HTML}{268BD2}
	\definecolor{solarizedCyan}{HTML}{2AA198}
	\definecolor{solarizedGreen}{HTML}{859900}
\tiny\color{solarizedBase00},
\newcounter{Rubycount}
\newcommand{\RubyReset}{\setcounter{Rubycount}{1}
}
\newcommand{\RubyCount}[1]{\ruby{\texttt{#1}}{\theRubycount}\addtocounter{Rubycount}{1}}
\newcommand{\RubyEach}[1]{\@tfor\next:=#1\do{\RubyCount{\next}}}
\newcommand{\UnaryBracketOperator}[2][]{\ifx&#1&\ensuremath{\mathop{}\mathopen{}#2\mathopen{}}\else \ensuremath{\mathop{}\mathopen{}#2\mathopen{}\left[#1\right]}\fi }
\newcommand{\LCP}[1][]{\UnaryBracketOperator[#1]{\instancename{LCP}}}
\newcommand{\ISA}[1][]{\UnaryBracketOperator[#1]{\instancename{ISA}}}
\newcommand{\SA}[1][]{\UnaryBracketOperator[#1]{\instancename{SA}}}
\newcommand{\BWT}[1][]{\UnaryBracketOperator[#1]{\instancename{BWT}}}
\newcommand{\emptystring}{\epsilon}
\newcommand{\ST}{\instancename{ST}}
\newcommand{\bv}[1]{\ensuremath{B_{\mathup{#1}}}}
\newcommand{\child}[1][]{\UnaryOperator[#1]{\operatorname{child}}}
\newcommand{\parent}[1][]{\UnaryOperator[#1]{\operatorname{parent}}}
\newcommand{\isleaf}[1][]{\UnaryOperator[#1]{\operatorname{is-leaf}}}
\newcommand{\leafselect}  [1][]{\UnaryOperator[#1]{\operatorname{leaf-select}}}
\newcommand{\levelanc}    [1][]{\UnaryOperator[#1]{\operatorname{level-anc}}}
\newcommand{\strdepth}    [1][]{\UnaryOperator[#1]{\operatorname{str\_depth}}}
\newcommand{\lcpcomp}{lcpcomp}
\newcommand{\CPP}{C{\tt ++}}
\newcommand{\CPPElf}{C{\tt ++}14}
\newcommand{\varT}{\ensuremath{t}}
\newcommand{\varA}{\ensuremath{b}}
\newcommand{\Footnote}[2]{\footnote{#2}}
\title{Compression with the tudocomp Framework}
\author[1]{Patrick Dinklage}
\author[1]{Johannes Fischer}
\author[1]{Dominik K{\"o}ppl}
\author[1]{Marvin L{\"o}bel}
\author[2]{Kunihiko Sadakane}
\affil[1]{Department of Computer Science, TU Dortmund, Germany\\
{\tt pdinklag@gmail.com}, {\tt johannes.fischer@cs.tu-dortmund.de}, {\tt dominik.koeppl@tu-dortmund.de}, {\tt loebel.marvin@gmail.com}}
\affil[2]{Grad.\ School of Inf.\ Science and Technology, University of Tokyo, Japan\\
{\tt sada@mist.i.u-tokyo.ac.jp}}
\subjclass{D.3.3 Frameworks, D.2.2 Software libraries}\keywords{lossless compression, compression framework, compression library, algorithm engineering, application of string algorithms}
\begin{document}
\maketitle{}

\begin{abstract}
We present a framework facilitating
the implementation and comparison of text compression algorithms.
We evaluate its features by a case study on two novel compression algorithms based on the Lempel-Ziv compression schemes 
that perform well on highly repetitive texts.
\end{abstract}

\section{Introduction}
Engineering novel compression algorithms is a relevant topic, shown by
recent approaches like bc-zip~\cite{FarruggiaFV14}, Brotli~\cite{brotli}, or Zstandard\footnote{\url{https://github.com/facebook/zstd}}.
Engineers of data compression algorithms face the fact that it is cumbersome 
(a) to build a new compression program from scratch, and (b) to evaluate and benchmark a compression algorithm against other algorithms objectively. 
We present the highly modular compression framework tudocomp that addresses both problems. 
To tackle problem~(a), tudocomp contains standard techniques like VByte~\cite{WilliamsZ99}, Elias-$\gamma$/$\delta$, or Huffman coding.
To tackle problem~(b), it provides automatic testing and benchmarking against external programs and implemented standard compressors like Lempel-Ziv compressors. 
As a case study, we present the two novel compression algorithms \lcpcomp{} and LZ78U, their implementations in tudocomp, 
and their evaluations with tudocomp.
\lcpcomp{} is based on Lempel-Ziv~77, substituting greedily the longest remaining repeated substring.
LZ78U is based on Lempel-Ziv~78, with the main difference that it allows a factor to introduce multiple new characters.

\subsection{Related Work}
There are many\footnote{e.g., \url{http://www.squeezechart.com} or \url{http://www.maximumcompression.com}} compression benchmark websites
measuring compression programs on a given test corpus. 
Although the compression ratio of a novel compression program can be compared with the ratios of the programs listed on these websites,
we cannot infer which program runs faster or more memory efficiently if these programs have not been compiled and run on the same machine.
Efforts in facilitating this kind of comparison
have been made by wrapping the source code of different compression algorithms
in a single executable that benchmarks the algorithms on the same machine with the same compile flags.
Examples include lzbench\footnote{\url{https://github.com/inikep/lzbench}} and Squash\footnote{\url{https://quixdb.github.io/squash-benchmark}}.

Considering frameworks aiming at easing the comparison \emph{and} implementation of new compression algorithms, 
we are only aware of the \CPP{}98 library ExCom~\cite{HolubRS11}.
The library contains a collection of compression algorithms.
These algorithms can be used as components for a compression pipeline.
However, ExCom does not provide the same flexibility as we had in mind;
it provides only character-wise pipelines, i.e., it does no bitwise transmission of data.
Its design does not use meta-programming features;
a header-only library has more potential for optimization since the compiler can inline header-implemented (possibly performance critical) functions easily.

A broader focus is set in Giuseppe Ottaviano's succinct library~\cite{GrossiO13}
and Simon Gog's Succinct Data Structure Library 2.0 (SDSL)~\cite{gbmp2014sea}.
These two libraries provide integer coders and helper functions for working on the bit level.

\subsection{Our Results/Approach}
Our lossless compression framework \intWort{tudocomp} aims at supporting and facilitating the implementation of novel compression algorithms.
The philosophy behind tudocomp is to support building a pipeline of modules that transforms an input to a compressed binary output.
This pipeline has to be flexible:
appending, exchanging and removing a module in the pipeline in a plug-and-play manner is in the main focus of the design of tudocomp.
Even a module itself can be refined into submodules. 

To this end, tudocomp is written in modern \CPPElf{}.
On the one hand, the language allows us to write compile time optimized code due to its meta programming paradigm.
On the other hand, its fine-grained memory management mechanisms support controlling and monitoring the memory footprint in detail.
We provide a tutorial, an exhaustive documentation of the API, and the source code at \url{http://tudocomp.org} with the permissive Apache License~2.0 to encourage developers to use and foster the framework.

In order to demonstrate its usefulness, we added reference implementations of common compression and encoding schemes (see \Cref{secDescrFramework}).
On top of that, we present two novel algorithms (see \Cref{secNewAlgo}) which we have implemented in our framework.
We give a detailed evaluation of these algorithms in \Cref{secEvaluation},
thereby exposing the benchmarking and the visualization tools of tudocomp.

\section{Description of the tudocomp Framework}\label{secDescrFramework}
On the topmost abstraction level, tudocomp defines the abstract types {\tt Compressor} and {\tt Coder}.
A \intWort{compressor} transforms an input into an output so that the input can be losslessly restored from the output by the corresponding \intWort{decompressor}. 
A \intWort{coder} takes an elementary data type like a character and writes it to a compressed bit sequence.
As with compressors, each coder is accompanied by a \intWort{decoder} taking care of restoring the original data from its compressed bit sequence.
By design, a coder can take the role of a compressor, but a compressor may not be suitable as a coder (e.g., a compressor that needs random access on the whole input).

tudocomp provides implementations of the compressors and the coders shown in the tables below.
Each compressor and coder gets an identifier (right column of each table).
\begin{multicols}{2}
\hspace{-2em}
\begin{tabular}{ll}
\multicolumn{2}{l}{Compressors}\\
	\toprule
BWT               & {\tt bwt}
\\
Coder wrapper    & {\tt encode}
\\
LCPComp~(\Cref{seclcpcomp})               & {\tt lcpcomp}
\\
LZ77~(Def.~\ref{defLZ77}), LZSS~\cite{StorerS82} output   & {\tt lzss\_lcp}
\\
LZ78~(Def.~\ref{defLZ78})              & {\tt lz78}
\\
LZ78U~(\Cref{secLZ78U})             & {\tt lz78u}
\\
LZW~\cite{Welch84}               & {\tt lzw}
\\
Move-To-Front               & {\tt mtf}
\\
Re-Pair~\cite{LarssonM99}            & {\tt repair}
\\
Run-Length-Encoding         & {\tt rle}
\\
\bottomrule
\end{tabular}

\hspace{-2em}
\begin{tabular}{ll}
	\multicolumn{2}{l}{Integer Coders} \\
 	\toprule
Bit-Compact Coder & {\tt bit}
\\
Elias-$\gamma$~\cite{elias75universal} & {\tt gamma}
\\
Elias-$\delta$~\cite{elias75universal} & {\tt delta} 
\\
\bottomrule
\end{tabular}

\hspace{-2em}
\begin{tabular}{p{5cm}l}
	\multicolumn{2}{l}{String Coders}\\
 	\toprule
 Canonical Huffman Coder~\cite{witten99managing} & {\tt huff}
 \\
A Custom Static Low Entropy Encoder (Section \ref{seclcpcomp}) & {\tt sle}
 \\
 \bottomrule
\end{tabular}
\vspace{-20em}

\end{multicols}

The behavior of a compressor or coder can be modified by passing different parameters.
A parameter can be an elementary data type like an integer, but it can also be an instance of a class that specifies certain subtasks
like integer coding.
For instance, the compressor~{\tt lzss\_lcp(threshold, coder)} takes an integer {\tt threshold}  and a {\tt coder} (to code an LZ77 factor) as parameters.
The coder is supplied as a parameter such that the compressor can call the coder directly (instead of alternatively piping the output of {\tt lzss\_lcp} to a coder).

The support of class parameters eases the deployment of the design pattern \intWort{strategy}~\cite{Gamma1995}.
A strategy determines what algorithm or data structure is used to achieve a compressor-specific task.

\block{Library and Command Line Tool}
tudocomp consists of two major components: 
a standalone compression library and a command line tool {\tt tdc}.
The library contains the core interfaces and implementations of the aforementioned compressors and coders.
The tool~{\tt tdc} exposes the library's functionality in form of an executable that can run compressors directly on the command line.
It allows the user to select a compressor by its identifier and to pass parameters to it, i.e., the user can specify the exact compression strategy \emph{at runtime}.

\block{Example}
For instance, the LZ78U compressor (\Cref{secLZ78U}) expects a compression strategy, an integer coder, and an integer variable specifying a threshold.
Its strategy can define parameters by itself, like which string coder to use.
A valid call is \texttt{./tdc -a 'lz78u(coder = bit, comp = buffering(string\_coder = huff), threshold = 3)' input.txt -o\linebreak[4] output.tdc},
where {\tt tdc} compresses the file {\tt input.txt} and stores the compressed bit sequence in the file {\tt output.tdc}.
To this end, it uses the compressor {\tt lz78u} parametrized by the coder {\tt bit} for integer values, by the compression strategy {\tt buffering} with {\tt huff} to code strings, 
and by a threshold value of~$3$.
Note that {\tt coder} and {\tt string\_coder} are parameters for two independently selectable coders.
When selecting a coder we have to pay attention that 
a static entropy coder like {\tt huff} needs to parse its input in advance (to generate a codeword for each occurring character).
To this end, we can only apply the coder {\tt huff} with a compression strategy that buffers the output (for {\tt lz78u} this strategy is called {\tt buffering}).
To stream the output (i.e., the opposite of buffering the complete output in RAM), 
we can use the alternative strategy {\tt streaming}. This strategy also requires a coder, but contrary to the buffering strategy, that coder does not need to look at the complete output (e.g., universal codes like {\tt gamma}).

In this fashion, we can build more sophisticated compression pipelines like {\tt lzma} applying different coders for literals, pointers, and lengths.
Each coder is unaware of the other coders, as if every coder was processing an independent stream.

\block{Decompression}
After compressing an input using a certain compression strategy, the tool adds a header to the compressed file so that
it can decompress it without the need for specifying the compression strategy again.
However, this behavior can be overruled by explicitly specifying a decompression strategy, e.g., in order to test 
different decompression strategies.

\begin{figure}[t]
	\centering{\includegraphics[width=\textwidth]{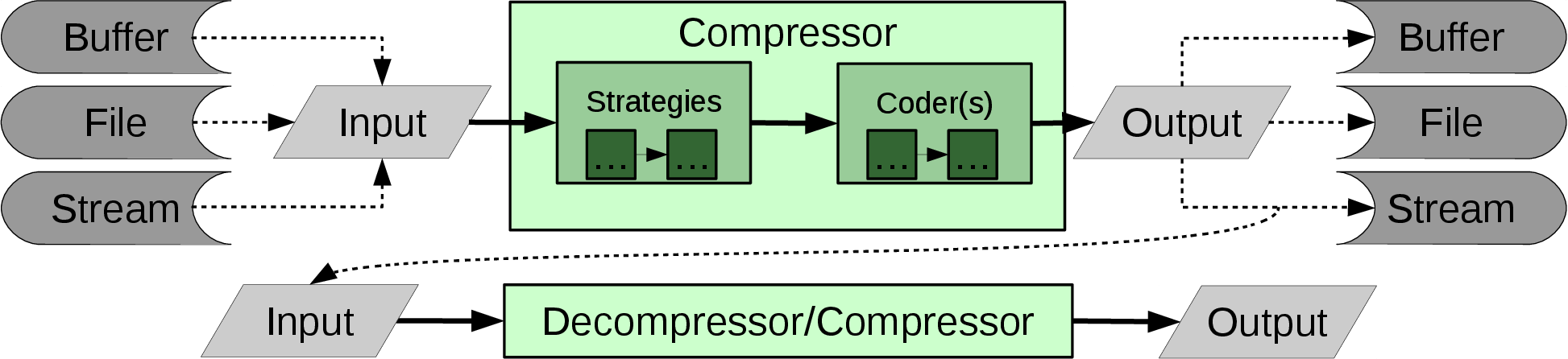}
	}
	\caption{Flowchart of a possible compression pipeline.
The compressors of tudocomp work with abstract data types for input and output, i.e.,
a compressor is unaware of whether the input or the output is a file, is stored in memory, or is accessed using a stream.
A compressor can follow one or more compression strategies that can have (nested) parameters.
Usually, a compressor is parametrized with one or more coders (e.g., for different integer ranges or strings) that produce the final output.
	}
	\label{figFlowchart}
\end{figure}

\block{Helper classes}
tudocomp provides several classes for easing common tasks when engineering a new compression algorithm,
like the computation of \SA{}, \ISA{} or \LCP{}.
tudocomp generates \SA{} with {\tt divsufsort}\Footnote{divsufsort}{\url{https://github.com/y-256/libdivsufsort}},
and \LCP{} with the $\Phi$-algorithm~\cite{kaerkkaeinen09permuted}.
The arrays \SA{}, \ISA{}, and \LCP{} can be stored in plain arrays or in packed arrays with a bit width of $\upgauss{\lg n}$ (where $n$ is the length of the input text), 
i.e., in a \intWort{bit-compact representation}.
We provide the modes {\tt plain}, {\tt compressed}, and {\tt delayed} to describe when/whether a data structure should be 
stored in a bit-compact representation: In {\tt plain} mode, all data structures are stored in plain arrays;
in {\tt compressed} mode, all data structures are built in a bit-compact representation.
In {\tt delayed} mode, tudocomp first builds a data structure~$A$ in a plain array; 
when all other data structures are built whose constructions depended on~$A$, $A$ gets transformed into a bit-compact representation.
\JOP{While {\tt direct} and {\tt compressed} are the fastest or the memory-friendliest modes, respectively, 
the data structures produced by {\tt delayed} are the same as {\tt compressed}, though {\tt delayed} is faster than {\tt compressed}.
}

If more elaborated algorithms are desired (e.g., for producing compressed data structures like the compressed suffix array),
it is easy to use tudocomp in conjunction with SDSL
for which we provide an easy binding.

\block{Combining streaming and offline approaches}
A compressor can stream its input (online approach) or request the input to be loaded into memory (offline approach).
Compressors can be chained to build a pipeline of multiple compression modules, like as in \Cref{figFlowchart}.

\subsection{Example Implementation of a Compressor}\label{secExampleBWT}
\begin{wrapfigure}{l}{.5\textwidth}
	\vspace{-2em}
	\begin{flushleft}
		{\scriptsize (a) \CPP{} Source Code}
	\end{flushleft}
	\vspace{-1.5em}
\begin{lstlisting}
#include <tudocomp/tudocomp.hpp>
class BWTComp : public Compressor {
  public: static Meta meta() {
    Meta m("compressor", "bwt");
    m.option("ds").templated<TextDS<>>();
    m.needs_sentinel_terminator();
    return m; }
  using Compressor::Compressor;
  void compress(Input& in, Output& out) {
    auto o = out.as_stream();
    auto i = in.as_view();
    TextDS<> t(env().env_for_option("ds"),i);
    const auto& sa = t.require_sa();
    for(size_t j = 0; j < t.size(); ++j)
      o << ((sa[j] != 0) ? t[sa[j] - 1]
                         : t[t.size() - 1]);
  }
  void decompress(Input&, Output&){/*[...]*/}
};
\end{lstlisting}
	\vspace{-2em}
	\begin{flushleft}
{\scriptsize (b) Execution with {\tt tdc} }
	\end{flushleft}
	\vspace{-1.5em}
\begin{lstlisting}[language=bash, morekeywords={hexdump,stat},emph={tdc},emphstyle={\color{solarizedViolet}}]
> echo -n 'aaababaaabaababa' > ex.txt
> ./tdc -a bwt -o bwt.tdc ex.txt
> hexdump -v -e '"b w t > ./tdc -a 'bwt:rle' -o rle.tdc ex.txt
> hexdump -v -e '"b  w  t  :  r  l  e  \end{lstlisting}
\vspace{-1.5em}
\end{wrapfigure}
The source code (a) on the left implements a compressor that computes the Burrows-Wheeler transform~(BWT) (see \Cref{secTheo}) of an input.
To this end, it loads the input into memory using (line 11) {\tt in.as\_view()} and computes the suffix array using (line 13) {\tt t.require\_sa()}.
In the function {\tt meta}, we state that we assume the unique terminal symbol 
(represented by the byte \bsq{\tt\textbackslash{}0}) as part of the text, and that
we want to register the class {\tt BWTComp} as a {\tt Compressor} with the identifier {\tt bwt}.
By doing so, we can call the compressor directly in the command line tool~{\tt tdc} using the argument {\tt -a bwt}.
In the shell code (b) on the left, you can see how we produced the BWT of our running example.
The program {\tt hexdump} outputs each character of a file such that non-visible characters are escaped.
A {\tt \%}-sign separates the header from the body in the output.
Next, we use the binary composition operator~{\tt :} connecting the output of its left operand with the input of its right operand.
In the shell code, this operator pipes the output of {\tt bwt} to the run-length encoding compressor {\tt rle}, which
transforms a substring $\texttt{aa}\hspace{-0.3em}\underbrace{\texttt{a}\cdots\texttt{a}}_{m~\textrm{times}}$ to ${\tt aa}m$ with 
$m \ge 0$ encoded in VByte (the output is a \emph{byte} sequence).

\begin{wrapfigure}{l}{.5\textwidth}
	\vspace{-2em}
	\begin{flushleft}
{\scriptsize (c) Assembling a compression pipeline }
	\end{flushleft}
	\vspace{-1.5em}
\begin{lstlisting}[language=bash, morekeywords={hexdump,stat},emph={tdc},emphstyle={\color{solarizedViolet}}]
> ./tdc -a bwt -o bwt.tdc pc_english.200MB
> ./tdc -a 'bwt:rle:mtf:encode(huff)' -o bzip.tdc pc_english.200MB
> stat -c"209715200 pc_english.200MB
209715209 bwt.tdc
66912437 bzip.tdc
\end{lstlisting}
\JO[\vspace{-2.5em}]{\vspace{-1em}}
\end{wrapfigure}
Finally, the compressor {\tt bwt} can be used as part of a pipeline to achieve good compression quality:
Given a move-to-front compressor {\tt mtf} and a Huffman coder {\tt huff}, we can build a chain
{\tt bwt:rle:mtf:encode(huff)}.
The compressor {\tt encode} is a wrapper that turns a coder into a compressor.
The last code fragment~(c) on the left shows the calls of this pipeline and a call of {\tt bwt} only.
Using {\tt stat}, we measure the file sizes (in bytes) of the input \textsc{pc-english} (see \Cref{secEvaluation}) and both outputs.

\subsection{Specific Features}
tudocomp excels with the following additional properties:

\block{Few Build Requirements}
To deploy tudocomp, the build management software {\tt cmake}, the version control system {\tt git}, Python 3, and a \CPPElf{} compiler are required.
{\tt cmake} automatically downloads and builds other third-party software components like the SDSL\@.
We tested the build process on Unix-like build environments, namely
Debian Jessie, Ubuntu Xenial, Arch Linux 2016, and the Ubuntu shell on Windows 10. 

\block{Unit Tests}
tudocomp offers semi-automatic unit tests.
For a registered compressor, tudocomp can automatically generate test cases that check 
whether the compressor can compress and decompress a set of selected inputs successfully. 
These inputs include border cases like the empty string, a run of the same character, samples on various subranges in UTF-8,
Fibonacci strings, Thue-Morse strings, and strings with a high number of runs~\cite{runrich}.
These strings can be generated on-the-fly by {\tt tdc} as an alternative input.

\block{Type Inferences}
The \CPP{} standard does neither provide a syntax for constraining type parameters (like generic type bounding in Java)
nor for querying properties of a class at runtime (i.e., reflection).
To address this syntactic lack,
we augment each class exposed to {\tt tdc} and to the unit tests with a so-called \intWort{type}.
A type is a string identifier. 
We expect that classes with the same type provide the same public methods.
Types resemble \emph{interfaces} of Java, but contrary to those, they are not subject to polymorphism.
Common types in our framework are {\tt Compressor} and {\tt Coder}.
The idea is that, given a compressor that accepts a {\tt Coder} as a parameter, it should accept all classes of type {\tt Coder}.
To this end, each typed class is augmented with an identifier and a description of all parameters that the class accepts.
All typed classes are exposed by the tool {\tt tdc} that calls a typed class by its identifier with the described parameters.
Types provide a uniform, but simple declaration of all parameters (e.g., integer values, or strategy classes).
The aforementioned exemplaric call of {\tt lz78u} at the beginning of \Cref{secDescrFramework} illustrates the uniform declaration of the parameters of a compressor.

\block{Evaluation tools}
To evaluate a compressor pipeline, tudocomp provides several tools to facilitate 
measuring the compression ratio, the running time, and the memory consumption.
By adding {\tt -{}-stats} to the parameters of {\tt tdc}, the tool monitors these measurement parameters:
It additionally tracks the running time and the memory consumption of the data structures in all phases.
A phase is a self-defined code division like a pre-processing phase, or an encoding phase.
Each phase can collect self-defined statistics like the number of generated factors.
All measured data is collected in a JSON file that can be visualized by the web application found at \url{http://tudocomp.org/charter}.
An example is given in \Cref{figEnglishDiagram}.

In addition, we have a command line \intWort{comparison tool} called {\tt compare.py} that runs a predefined set of compression programs 
(that can be tudocomp compressors or external compression programs). 
Its primary usage is to compare tudocomp compression algorithms with external compression programs.
It monitors the memory usage with the tool {\tt valgrind --tool=massif --pages-as-heap=yes}.
This tool is significantly slower than running {\tt tdc} with {\tt -{}-stats}.

\section{New Compression Algorithms}\label{secNewAlgo}
With the aid of tudocomp, it is easy to implement new compression algorithms.
We demonstrate this by introducing two novel compression algorithms: \intWort{\lcpcomp{}} and \intWort{LZ78U}.
To this end, we first recall some definitions.

\subsection{Theoretical Background}\label{secTheo}
Let $\Sigma$ denote an integer alphabet of size $\sigma = \abs{\Sigma} \le n^{\Oh{1}}$ for a natural number~$n$.
We call an element $T \in \Sigma^*$ a \intWort{string}.
The empty string is $\emptystring$ with $\abs{\emptystring}=0$.
Given $x,y,z \in \Sigma^*$ with $T = xyz$, 
then $x$, $y$ and $z$ are called a \intWort{prefix}, \intWort{substring} and \intWort{suffix} of $T$, respectively.
We call $T[i..]$ the $i$-th suffix of $T$, and denote a substring $T[i] \cdots T[j]$ with $T[i..j]$.

For the rest of the article, we take a string~$T$ of length~$n$.
We assume that $T[n]$ is a special character $\texttt{\$} \notin \Sigma$ smaller than all characters of $\Sigma$ so that no suffix of $T$ is a prefix of another suffix of $T$.

\SA{} and \ISA{} denote the suffix array~\cite{manber93suffix} and the inverse suffix array of $T$, respectively.
\LCP[2..n] is an array such that \LCP[i] is the length of the longest common prefix of the \emph{lexicographically} $i$-th smallest suffix with its lexicographic predecessor for $i=2,\ldots,n$.
The BWT~\cite{burrows94block} of~$T$ is the string~$\BWT$ 
with 
	\scases{\BWT[j] =}{T[n] & \text{if~} {\SA[j]} = 1, \\
		T[{\SA[j]-1}] & \text{otherwise,\ }
}for $1 \le j \le n$.
The arrays \SA{}, \ISA{}, \LCP{} and \BWT{} can be constructed in time linear to the number of characters of~$T$~\cite{kaerkkaeinen06linear}.

As a running example, we take the text $T := {\tt aaababaaabaababa\$} $.
The arrays \SA{}, \LCP{} and \BWT{} of this example text are shown in \Cref{figSALCP}.

\begin{figure}[t]
\floatbox[{\capbeside\thisfloatsetup{capbesideposition={right},capbesidewidth=3cm}}]{figure}[\FBwidth]
	{\caption{Suffix array, inverse suffix array, LCP array and BWT of the running example.}
	\label{figSALCP}
}{\tabcolsep=0.1em
	\begin{tabular}{l*{17}{R{1.3em}}}
$i$   & 1  & 2  & 3 & 4 & 5 & 6  & 7 & 8  & 9 & 10 & 11 & 12 & 13 & 14 & 15 & 16 & 17
			\\\midrule
			$T$   & {\tt a}  & {\tt a}  & {\tt a} & {\tt b} & {\tt a} & {\tt b}  & {\tt a} & {\tt a}  & {\tt a} & {\tt b}  & {\tt a}  & {\tt a}  & {\tt b}  & {\tt a}  & {\tt b}  & {\tt a}  & {\tt \$}
\\
{\SA[i]}  & 17 & 16 & 7 & 1 & 8 & 11 & 2 & 14 & 5 & 9  & 12 & 3  & 15 & 6  & 10 & 13 & 4
\\
{\ISA[i]} &4&7&12&17&9&14&3&5&10&15&6&11&16&8&13&2&1
\\
{\LCP[i]} & -  & 0 & 1 & 5 & 2 & 4 & 6 & 1 & 3 & 4 & 3 & 5 & 0 & 2 & 3 & 2 & 4
\\
{\BWT[i]} & {\tt a} & {\tt b} & {\tt b} & {\tt \$} & {\tt a} & {\tt b} & {\tt a} & {\tt b} & {\tt b} & {\tt a} & {\tt a} & {\tt a} & {\tt a} & {\tt a} & {\tt a} & {\tt a} & {\tt a}
		\end{tabular}
}\end{figure}

Given a bit vector $\bv{}$ with length $\abs{\bv{}}$, 
the operation $\bv{}.\rank[1](i)$ counts the number of \bsq{1}-bits in $\bv{}[1..i]$, 
and the operation $\bv{}.\select[1](i)$ yields the position of the $i$-th \bsq{1} in $\bv{}$.

There are data structures~\cite{jacobson89space,clark96compact} that can answer $\rank{}$ and $\select{}$ queries on $\bv{}$ in constant time, respectively.
Each of them uses $\oh{\abs{\bv{}}}$ additional bits of space, and both can be built in $\Oh{\abs{\bv{}}}$ time.

The \intWort{suffix trie} of $T$ is the trie of all suffixes of $T$.
The \intWort{suffix tree}~\cite{weiner73linear} of $T$, denoted by \ST{}, is the tree obtained by compacting the suffix trie of $T$.
It has $n$ leaves and at most $n$ internal nodes.
The string stored in an edge~$e$ is called the \intWort{edge label} of $e$, and denoted by~$\lambda(e)$.
The \intWort{string depth} of a node $v$ is the length of the concatenation of all edge labels on the path from the root to $v$.
The leaf corresponding to the $i$-th suffix is labeled with $i$.

Each node of the suffix tree is uniquely identified by its pre-order number.
We can store the suffix tree topology in a bit vector (e.g., DFUDS~\cite{Benoit2005RTo} or BP~\cite{jacobson89space,cstsada})
such that \rank{} and \select{} queries enable us to address a node by its pre-order number in constant time.
If the context is clear, we implicitly convert an \ST{} node to its pre-order number, and vice versa.
We will use the following constant time operations on the suffix tree:
\begin{itemize}
	\item  $\parent[v]$ selects the parent of the node~$v$,
	\item  $\levelanc[\ell,d]$ selects the ancestor of the leaf~$\ell$ at depth~$d$ (level ancestor query), and
	\item  $\leafselect[i]$ selects the $i$-th leaf (in lexicographic order).
\end{itemize}

A \intWort{factorization} of $T$ of size~$z$ partitions $T$ into $z$~substrings $T=F_1 \cdots F_z$.
These substrings are called \intWort{factors}. In particular, we have:

\begin{definition}\label{defLZ77}
	A factorization $F_1\cdots F_z = T$ is called the \intWort{Lempel-Ziv-77 (LZ77) factorization}~\cite{ziv77universal} of $T$ with a threshold~$\varT \ge 1$ iff 
$F_x$ is either the longest substring of length at least~$\varT$ occurring at least twice in $F_1 \cdots F_x$, or, if such a substring does not exist, 
a single character. 
We merge successive occurrences of the latter type of factors to a single factor and call it a \intWort{remaining substring}.
\end{definition}
The usual definition of the LZ77 factorization fixes $\varT=1$.
We introduced the version with a threshold to make the comparison with \lcpcomp{} (\Cref{seclcpcomp}) fairer.

\begin{definition}\label{defLZ78}
	A factorization $F_1\cdots F_z = T$ is called the \intWort{Lempel-Ziv-78 (LZ78) factorization}~\cite{ziv78compression} of $T$ iff 
$F_x=F_y\cdot c$ with $F_y = \argmax_{S \in \menge{F_y : y < x} \cup \menge{\epsilon} } \abs{S}$ and $c\in\Sigma$
for all $1 \le x \le z$.
\end{definition}

\subsection{\lcpcomp{}}\label{seclcpcomp}
The idea of \lcpcomp{} is to search for long repeated substrings and
substitute one of their occurrences with a reference to the other.
Large values in the LCP-array indicate such long repeated substrings.
There are two major differences to the LZ77 compression scheme:
(1) while LZ77 only allows back-references, \lcpcomp{} allows both back \emph{and forward} references; and (2) LZ77 factorizes $T$ greedily from left to right, whereas \lcpcomp{} makes substitutions at \emph{arbitrary} positions in the text, greedily chosen such that the number of substituted characters is maximized.
This process is repeated until all remaining repeated substrings are shorter than a threshold~$\varT$.
On termination, \lcpcomp{} has generated a factorization $T = F_1 \cdots F_z$, 
where each $F_j$ is either a remaining substring, or a
reference~$(i,\ell)$ with the intended meaning ``copy $\ell$ characters from position $i$''
(see \Cref{figParsingArrowsLCPcomp} for an example).

\begin{figure}[t]
\begin{minipage}[b]{.5\textwidth}
\begin{adjustbox}{trim=0 0cm 0 0.5cm}
\centering{\begin{tikzpicture}[transform shape, inner sep=1mm ]
		\pgfdeclarelayer{bg}    \pgfsetlayers{bg,main}  \matrix[array] (warray) {a \pgfmatrixnextcell a \pgfmatrixnextcell a \pgfmatrixnextcell b \pgfmatrixnextcell a \pgfmatrixnextcell b \pgfmatrixnextcell a \pgfmatrixnextcell a \pgfmatrixnextcell a \pgfmatrixnextcell b \pgfmatrixnextcell a \pgfmatrixnextcell a \pgfmatrixnextcell b \pgfmatrixnextcell a \pgfmatrixnextcell b \pgfmatrixnextcell a \pgfmatrixnextcell \$ \\
 1\pgfmatrixnextcell 2\pgfmatrixnextcell 3\pgfmatrixnextcell 4\pgfmatrixnextcell 5\pgfmatrixnextcell 6\pgfmatrixnextcell 7\pgfmatrixnextcell 8\pgfmatrixnextcell 9\pgfmatrixnextcell 10\pgfmatrixnextcell 11\pgfmatrixnextcell 12\pgfmatrixnextcell 13\pgfmatrixnextcell 14\pgfmatrixnextcell 15\pgfmatrixnextcell 16\pgfmatrixnextcell 17
\\};
{\NoRef{1}{1}}{\Referenz{1}{2}{2}{3}{0.2}{north}{solarizedYellow}}{\NoRef{4}{4}}{\Referenz{3}{5}{5}{7}{0.2}{north}{solarizedRed}}{\Referenz{2}{5}{8}{11}{0.25}{north}{solarizedGreen}}{\Referenz{3}{7}{12}{16}{0.29}{north}{solarizedBlue}}{\NoRef{17}{17}}\end{tikzpicture}
}\end{adjustbox}
\subcaption{LZ77}\label{figParsingArrows77}
\end{minipage}\begin{minipage}[b]{.5\textwidth}
\begin{adjustbox}{trim=0 0cm 0 0.5cm}
\centering{\begin{tikzpicture}[transform shape, inner sep=1mm ]
		\pgfdeclarelayer{bg}    \pgfsetlayers{bg,main}  \matrix[array] (warray) {a \pgfmatrixnextcell a \pgfmatrixnextcell a \pgfmatrixnextcell b \pgfmatrixnextcell a \pgfmatrixnextcell b \pgfmatrixnextcell a \pgfmatrixnextcell a \pgfmatrixnextcell a \pgfmatrixnextcell b \pgfmatrixnextcell a \pgfmatrixnextcell a \pgfmatrixnextcell b \pgfmatrixnextcell a \pgfmatrixnextcell b \pgfmatrixnextcell a \pgfmatrixnextcell \$ \\
 1\pgfmatrixnextcell 2\pgfmatrixnextcell 3\pgfmatrixnextcell 4\pgfmatrixnextcell 5\pgfmatrixnextcell 6\pgfmatrixnextcell 7\pgfmatrixnextcell 8\pgfmatrixnextcell 9\pgfmatrixnextcell 10\pgfmatrixnextcell 11\pgfmatrixnextcell 12\pgfmatrixnextcell 13\pgfmatrixnextcell 14\pgfmatrixnextcell 15\pgfmatrixnextcell 16\pgfmatrixnextcell 17
\\};
{\NoRef{1}{1}}
{\Ref[below]{11}{16}{2}{7}{0.2}{north}{solarizedRed}}{\NoRef{8}{8}}
{\Ref[l]{5}{6}{9}{10}{0.4}{north}{solarizedGreen}}{\Referenz{8}{11}{11}{14}{0.3}{north}{solarizedBlue}}{\NoRef{15}{17}}\end{tikzpicture}
}\end{adjustbox}
\subcaption{\lcpcomp{}}\label{figParsingArrowsLCPcomp}
\end{minipage}
\caption{References of the (a) LZ77 factorization 
	with the threshold~$\varT = 2$, and of the 
	(b) \lcpcomp{} factorization
		with the same threshold.
		The output of the LZ77 and the \lcpcomp{} algorithms are
			{\tt a}{\tt (1,2)}{\tt b}{\tt (3,3)}{\tt (2,4)}{\tt (3,5)}{\tt \$} 
			and
			{\tt a}{\tt (11,6)}{\tt a}{\tt (5,2)}{\tt (8,4)}{\tt ba\$}, respectively.
		}
		\label{figParsingArrows}
\end{figure}
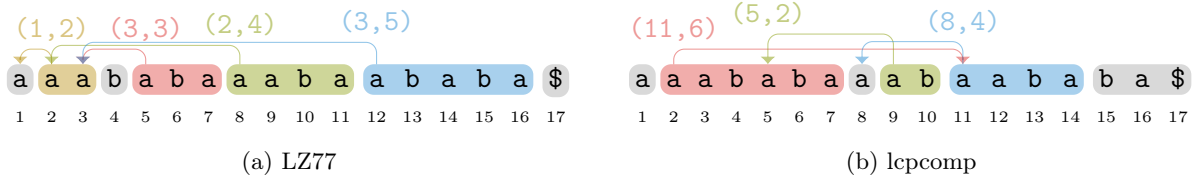

\block{Algorithm}
The LCP array stores the longest common prefix of two lexicographically neighboring suffixes.
The largest entries in the LCP array correspond to the longest substrings of the text that have at least two occurrences.
Given a suffix~$T[\SA[i]..]$ whose entry \LCP[i] is maximal among all other values in \LCP,
we know that $T[\SA[i]..\SA[i]+\LCP[i]-1] = T[\SA[i-1]..\SA[i-1]+\LCP[i]-1]$, i.e., we can substitute~$T[\SA[i]..\SA[i]+\LCP[i]-1]$ with the reference~$(\SA[i-1],\LCP[i])$.
In order to find a suffix whose \LCP{} entry is maximal, we need a data structure that 
maintains suffixes ordered by their corresponding LCP values.
We use a maximum heap for this task. 
To this end, the heap stores suffix array indices whose keys are their LCP values (i.e., insert $i$ with key \LCP[i], $2 \le i \le n$).
The heap stores only those indices whose keys are at least~$\varT$.

While the heap is not empty, we do the following:
\begin{enumerate}
	\item Remove the maximum from the heap; let $i$ be its value.
	\item Report the reference $(\SA[i-1], \LCP[i])$ and the position \SA[i] as a triplet $(\SA[i-1],$ $ \LCP[i],$ $ \SA[i])$.
	\item For every $1 \le k \le \LCP[i]-1$, remove the entry $\ISA[ {\SA[i]+k} ]$ from the heap (as these positions are covered by the reported reference).
	\item Decrease the keys of all entries~$j$ with $\SA[i]-\LCP[i] \le \SA[j] < \SA[i]$ to $\min(\LCP[j], \SA[i]-\SA[j])$.
		(If a key becomes smaller than $\varT$, remove the element from the heap.)
		By doing so, we prevent the substitution of a substring of~$T[\SA[i]..\SA[i]+\LCP[i]-1]$ at a later time.
\end{enumerate}

\begin{wrapfigure}{l}{.5\textwidth}
\begin{lstlisting}
template<class text_t>
class MaxHeapStrategy : public Algorithm {
 public: static Meta meta() {
  Meta m("lcpcomp_strategy", "heap");
  return m; }
 using Algorithm::Algorithm;
 void create_factor(size_t pos, size_t src, size_t len);
 void factorize(text_t& text, size_t t) {
  text.require(text_t::SA | text_t::ISA | text_t::LCP);
  auto& sa = text.require_sa();
  auto& isa = text.require_isa();
  auto lcpp = text.release_lcp()->relinquish();
  auto& lcp = *lcpp;
  ArrayMaxHeap<typename text_t::lcp_type::data_type> heap(lcp, lcp.size(), lcp.size());
  for(size_t i = 1; i < lcp.size(); ++i)
   if(lcp[i] >= t) heap.insert(i);
  while(heap.size() > 0) {
   size_t i = heap.top(), fpos = sa[i],
       fsrc = sa[i-1], flen = heap.key(i);
   create_factor(fpos, fsrc, flen);
   for(size_t k=0; k < flen; k++)
    heap.remove(isa[fpos + k]);
   for(size_t k=0;k < flen && fpos > k;k++) {
    size_t s = fpos - k - 1;
    size_t j = isa[s];
    if(heap.contains(j)) {
     if(s + lcp[j] > fpos) {
      size_t l = fpos - s;
      if(l >= t)
       heap.decrease_key(j, l);
      else heap.remove(j);
}}}}}};
\end{lstlisting}
\JO[\vspace{-2em}]{\vspace{-5em}}
\end{wrapfigure}

As an invariant, the key~$\ell$ of a suffix array index~$i$ stored in the heap will always be the maximal number of characters such that~$T[i..i+\ell-1]$ occurs at least twice in the \emph{remaining} text.

The reported triplets are collected in a list.
To compute the final output, we sort the triplets by their third component (storing the starting position of the substring substituted by the reference stored in the first two components).
We then scan simultaneously over the list and the text to generate the output. 
\Cref{figElaboratedLcpcompExample} demonstrates how the \lcpcomp{} factorization of the running example is done step-by-step.

The code on the left implements the compression strategy of \lcpcomp{} that uses a maximum heap.
We transfered the code from the compressor class to a strategy class since the \lcpcomp{} compression scheme can be implemented in different ways.
Each strategy receives a text. Its goal is to compute all factors (created by the {\tt create\_factor} method).
In the depicted strategy, we use a maximum heap to find all factors.
The heap is implemented in the class {\tt ArrayMaxHeap}. 
An instance of that class stores an array~$A$ of keys and an array heap maintaining (key-value)-pairs of the form~$(A[i],i)$ with the order
$(A[i],i) < (A[j],j) :\Leftrightarrow A[i] < A[j]$. 
To access a specific element in the heap by its value, the class has an additional array storing the position of each value in the heap.

\JO[Although a reference~$r$ can refer to a substring that has been substituted by another reference after the creation of~$r$,
in Lem.~\ref{lemNoCycles} (Appendix), we show that it is always possible to restore the text.
]{\block{Correctness} Although a reference~$r$ can refer to a substring that has been substituted by another reference after the creation of~$r$, it is still possible to restore the text due to the following lemma:
}\J[\section{Cycle-Free Lemma of lcpcomp}
]{\begin{lemma}\label{lemNoCycles}
	The output of \lcpcomp{} contains enough information to restore the original text.
\end{lemma}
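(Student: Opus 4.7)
The plan is to recast decoding as the iteration of a lookup function $f$ on text positions and to exhibit a strictly decreasing integer potential along the resulting chains, so that every chain is guaranteed to terminate at a literal. The main obstacle I anticipate is proving this descent: because \lcpcomp{} emits both backward and forward references at arbitrary positions, it is not a priori clear that the web of source-to-destination pointers is acyclic.

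First I would show that the intervals $[p_t,p_t+\ell_t-1]$ of distinct factors are pairwise disjoint. Step~3 of the algorithm removes the heap entries at $p_t+1,\ldots,p_t+\ell_t-1$ and $p_t$ itself has just been extracted, so no later factor starts inside $[p_t,p_t+\ell_t-1]$; step~4 caps the keys of the positions in $[p_t-\ell_t,p_t-1]$ so that no later factor starting there extends into $[p_t,\ldots]$; and any position further to the left has heap key at most the current maximum $\ell_t$, which is too small to reach $p_t$. Hence every position of $T$ lies in exactly one factor or is emitted as a literal, and setting
\[
  f(p)\;:=\;\begin{cases} s_t+(p-p_t)&\text{if $p\in[p_t,p_t+\ell_t-1]$ for a reference $(s_t,\ell_t,p_t)$,}\\ p&\text{otherwise,}\end{cases}
\]
the decoding relation $T[p]=T[f(p)]$ holds whenever $p$ is covered, so decoding reduces to iterating $f$ until a fixed point (a literal) is reached.

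The heart of the proof is the claim $\ISA[f(p)]<\ISA[p]$ for every covered $p$. Here I would exploit that $p_t=\SA[i_t]$ and $s_t=\SA[i_t-1]$, hence $\ISA[s_t]=\ISA[p_t]-1$, together with the monotonicity of the heap keys, which yields $\LCP[i_t]\ge\ell_t$. By definition of the LCP array, $T[s_t..s_t+\LCP[i_t]-1]=T[p_t..p_t+\LCP[i_t]-1]$ and $T[s_t+\LCP[i_t]]<T[p_t+\LCP[i_t]]$. Shifting both starting positions by any $0\le k<\ell_t\le\LCP[i_t]$ leaves a shared prefix of length $\LCP[i_t]-k\ge1$ followed by the same mismatching character pair, so $T[s_t+k..]<T[p_t+k..]$ lexicographically; equivalently, $\ISA[s_t+k]<\ISA[p_t+k]$, as required.

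Since $\ISA$ takes values in $\{1,\ldots,n\}$, iterating $f$ strictly decreases a positive integer and must stabilize at some literal within at most $\ISA[p]$ steps; propagating its emitted character back along the chain then recovers $T[p]$. In summary, disjointness of references and termination of decoding are straightforward once the $\ISA$ descent is established, so the crux of the argument is really the lexicographic comparison of the shifted suffixes that supplies that descent.
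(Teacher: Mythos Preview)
Your proof is correct and follows essentially the same approach as the paper: both hinge on the observation that if position $p$ is covered by a reference with source $s$ and offset $k$, then $\ISA[s+k]<\ISA[p]$, so following references strictly decreases $\ISA$ and hence cannot cycle. Your presentation is more explicit---you formalize the lookup map $f$, justify $\ell_t\le\LCP[i_t]$ via key monotonicity, and spell out the shifted-suffix comparison through the mismatching character---whereas the paper states the inequality $T[\SA[i]+k..]>T[\SA[i-1]+k..]$ without elaboration and then appeals to transitivity; you also prove disjointness of the reference intervals, which the paper leaves implicit.
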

\begin{proof}
	We want to show that the output is free of cycles, i.e.,
	there is no text position~$i$ for that $i \underbrace{\rightarrow \cdots \rightarrow}_{\text{cycle length}} i$ holds,
	where $\rightarrow$ is a relation on text positions such that
	$i \rightarrow j$ holds iff there is a substring $T[i'..i'+\ell-1]$ with $i \in [i',i'+\ell-1]$ that has been substituted by a reference~$(j-i+i',\ell)$.
	If the text is free of cycles, then each substituted text position can be restored by following a finite chain of references.
	
	First, we show that is not possible to create cycles of length two.
	Assume that we substituted $T[\SA[i]..\SA[i]+\ell_i-1]$ with $(\SA[i-1],\ell_i)$ for $\varT \le \ell_i \le \LCP[i]$.
	The algorithm will not choose
	$T[\SA[i-1]+k..\SA[i-1]+k+\ell_k-1]$ for $0 \le k \le \ell_i$ and $\varT \le \ell_k \le \LCP[i]-k$ to be substituted with $(\SA[i]+k,\ell_k)$,
	since $T[\SA[i]+k..] > T[\SA[i-1]+k..]$ and therefore $\ISA[ {\SA[i]+k} ] > \ISA[ {\SA[i-1]+k} ]$.
	Finally, by the transitivity of the lexicographic order (i.e., the order induced by the suffix array), it is neither possible to produce larger cycles.
\end{proof}
}

\block{Time Analysis}
We insert at most $n$~values into the heap.
No value is inserted again.
Finally, we use the following lemma to get a running time of \Oh{n \lg n} :
\begin{lemma}\label{lemDecreaseKey}
The key of a suffix array entry is decreased at most once.
\end{lemma}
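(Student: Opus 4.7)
The plan is proof by contradiction. Suppose some heap entry~$j$, with text position~$s = \SA[j]$, undergoes two decreases: the first triggered by a factor $(p_1, \ell_1)$ and a later second one by a factor $(p_2, \ell_2)$. The inner \texttt{for}-loop of the code confines the affected $s$ to $[p_1 - \ell_1, p_1 - 1]$ and to $[p_2 - \ell_2, p_2 - 1]$, so $s < p_1$ and $s < p_2$. After the first decrease, $\text{key}(j) = p_1 - s$, and the guard $s + \text{key}(j) > p_2$ needed for the second decrease therefore forces $p_1 > p_2$.

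The technical heart of the argument is a non-overlap claim that I would establish by induction on the sequence of created factors: every factor $(p, \ell)$ satisfies $p + \ell \le p'$ for every earlier factor $(p', \ell')$ with $p' > p$. If $p \in [p' - \ell', p' - 1]$ then during the processing of $(p', \ell')$ the key of~$p$ was updated to at most $p' - p$, and since \texttt{decrease\_key} never raises a key, by the time $(p, \ell)$ is created one has $\ell \le p' - p$. If instead $p < p' - \ell'$, then at the moment $(p', \ell')$ was popped, the max-heap property gave $\text{key}(p) \le \ell'$, which monotonicity preserves, so $\ell \le \text{key}(p) \le \ell' < p' - p$.

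Applying the non-overlap claim to $(p_1, \ell_1)$ and $(p_2, \ell_2)$ yields $\ell_2 \le p_1 - p_2$. On the other hand, at the moment factor~$2$ is popped, $j$ is still on the heap with key $p_1 - s$, and the max-heap property forces $\ell_2 \ge p_1 - s$. Chaining, $p_1 - s \le \ell_2 \le p_1 - p_2$, whence $p_2 \le s$, contradicting $s < p_2$. This closes the proof.

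The main obstacle I anticipate is the inductive non-overlap claim: one must be careful that the max-heap property invoked at the time $(p', \ell')$ was popped bounds the \emph{current} (possibly already decreased) key of~$p$, not its original $\LCP$ value, so that the bound is inherited by $\ell$. Once that is in place, the contradiction is a one-line combination of the max-heap property with the monotone-decreasing behaviour of keys.
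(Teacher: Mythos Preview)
Your argument is correct, but it takes a longer route than the paper's. The paper's proof is essentially a one-liner: after the first decrease (triggered by the factor at text position~$p_1$), one has $K[j]=p_1-s-1$, and \emph{the very same step-4 pass} decreased the key at every text position $s+m$ (for $1\le m\le K[j]$) to at most $p_1-(s+m)-1=K[j]-m<K[j]$. These positions $s+1,\ldots,p_1-1$ are \emph{exactly} the candidates for a second-decrease-triggering factor (any $p_2$ must satisfy $s<p_2$ and, for the decrease to be strict, $p_2-s-1<K[j]$, i.e., $p_2\le s+K[j]=p_1-1$). Since all of them now carry keys strictly smaller than $K[j]$ and keys are monotone, none can be popped while $j$ sits in the heap with key $K[j]$; hence no second decrease.

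Your non-overlap invariant is true and your derivation of the contradiction from it is clean, but notice that only Case~1 of your invariant is ever invoked here: because $s\ge p_1-\ell_1$, any $p_2\in\{s+1,\ldots,p_1-1\}$ automatically lies in $[p_1-\ell_1,p_1-1]$, so Case~2 never arises. In effect you have re-proved the paper's observation (Case~1 \emph{is} the paper's observation) and then wrapped it inside a more general statement. The induction you mention is also unnecessary: the bound on the key at $p$ obtained at the moment $(p',\ell')$ is popped, together with monotonicity of keys, already gives $\ell\le p'-p$ directly for any later factor $(p,\ell)$, without appealing to intermediate factors. What your approach buys is a reusable structural lemma (disjointness of all factors), which is pleasant but not needed for this particular statement.
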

\begin{proof}
Let us denote the key of a value~$i$ stored in the heap by~$K[i]$.
Assume that we have decreased the key~$K[j]$ of some value~$j$ stored in the heap 
after we have substituted a substring~$T[i..i+\ell-1]$ with a reference.
It holds that
$K[j] = \SA[i]-\SA[j]-1 > \SA[i]-\SA[j]-1-m \ge K[ {\ISA[ {\SA[j]+m} ]} ]$ for all~$m$ with $1 \le m \le K[j]$, 
i.e., there is no suffix array entry that can decrease the key of~$j$ again.
\end{proof}

\begin{figure}[t]
\newcommand{\XA}[1]{{\color{solarizedRed}#1}}
\newcommand{\X}[1]{{\color{solarizedGreen}\noindent\fbox{#1}}}
\newcommand{\XD}[1]{{\color{solarizedBlue}#1}}
	\centering{\begin{tabular}{l*{17}{c}}
$i$   & 1  & 2  & 3 & 4 & 5 & 6  & 7 & 8  & 9 & 10 & 11 & 12 & 13 & 14 & 15 & 16 & 17
			\\\toprule
			$T$   & {\tt a}  & {\tt a}  & {\tt a} & {\tt b} & {\tt a} & {\tt b}  & {\tt a} & {\tt a}  & {\tt a} & {\tt b}  & {\tt a}  & {\tt a}  & {\tt b}  & {\tt a}  & {\tt b}  & {\tt a}  & {\tt \$}
\\
{\SA[i]}             & 17 & 16 & 7      & 1      & 8      & 11    & 2     & 14     & 5      & 9      & 12     & 3      & 15 & 6      & 10     & 13     & 4
\\
			\midrule
{\LCP[i]}            & -  & 0  & 1      & 5      & 2      & 4     & 6     & 1      & 3      & 4      & 3      & 5      & 0  & 2      & 3      & 2      & 4
\\
{$\LCP^{1}[i]$}      & -  & 0  & \XA{0} & \XD{1} & 2      & 4     & \X{0} & 1      & \XA{0} & 4      & 3      & \XA{0} & 0  & \XA{0} & 3      & 2      & \XA{0}
\\
{$\LCP^{2}[i]$}      & -  & 0  & 0      & 1      & 2      & \X{0} & 0     & \XA{0} & 0      & \XD{2} & \XA{0} & 0      & 0  & 0      & \XD{1} & \XA{0} & 0
\\
{$\LCP^{3}[i]$}      & -  & 0  & 0      & 1      & \XD{1} & 0     & 0     & 0      & 0      & \X{0}  & 0      & 0      & 0  & 0      & \XA{0}  & 0      & 0
		\end{tabular}
	}
	\caption{Step-by-step computation of the \lcpcomp{} compression scheme in \Cref{figParsingArrowsLCPcomp}.
We scan for the largest LCP value in \LCP{} and overwrite values in \LCP{} instead of using a heap.
Each row $\LCP^{i}[i]$ shows the \LCP{} array after computing a substitution.
The LCP value of the starting position of the selected largest repeated substring has a green border.
The updated values are colored, either due to deletion (red) or key reduction (blue).
Ties are broken arbitrarily.
The number of red zeros in each row is equal to the number above the green bordered zero in the corresponding row minus one.
	}
	\label{figElaboratedLcpcompExample}
\end{figure}

\subsubsection{Decompression}
Decompressing \lcpcomp{}-compressed data is harder than decompressing LZ77,
since references in \lcpcomp{} can refer to positions that have not yet been decoded.
\Cref{figParsingArrows} depicts the references built on our running example by arrows.

In order to cope with this problem, we add, 
for each position~$i$ of the original text,
a list~$L_i$ storing the text positions waiting for
this text position getting decompressed.

First, we determine the original text size (the compressor stores it as a VByte before the output of the factorization).
Subsequently, while there is some compressed input, we do the following, using a counting variable~$i$ as a cursor in the text that we are going to rebuild:
\begin{itemize}
	\item If the input is a character~$c$, we write~$T[i] \gets c$, and increment~$i$ by one.
	\item If the input is a reference consisting of a position~$s$ and a length~$\ell$, 
		we check whether $T[s+j]$ is already decoded, for each $j$ with $0 \le j \le \ell-1$:
		\sitemize{\item If it is, then we can restore $T[i+j] \gets T[s+j]$.
			\item Otherwise, we add $i+j$ to the list~$L_{s+j}$.
	}In either case, we increment~$i$ by~$\ell$.
\end{itemize}

	An additional procedure is needed to restore the text completely by processing the lists:
	On writing~$T[i] \gets c$ for some text position~$i$ and some character~$c$, 
	we further write $T[j] \gets T[i]$ for each~$j$ stored in $L_i$ (if $L_j$ is not empty, we proceed recursively).
	Afterwards, we can delete $L_i$ since it will be no longer needed.
	The decompression runs in \Oh{n} time, since we perform a linear scan over the decompressed text, and each text position is visited at most twice.

\JO{\subsubsection{Related Work} 
Rather than adapting the LZ77 compression scheme, 
\cite{NakamuraIBFTS09} and \cite{RistovK15}
compute a grammar by subsequently substituting a longest substring in the text occurring at least twice without overlapping.
After computing such a substring, they exchange each occurrence of it with a new non-terminal.
Compressing the text with a grammar has two main characteristics:
On the one hand, it substitutes \emph{every} non-overlapping occurrence of a substring with the \emph{same} new non-terminal.
On the other hand, it enforces that every substring that is going to be substituted has at least two occurrences \emph{without} overlapping.
}

\subsubsection{Implementation Improvements}\label{seclcpcompImprovement}
In this section, we present an \Oh{n} time compression algorithm alternative to the heap strategy and a practical improvement of the decompression strategy.

\block{Compression}
This strategy computes an array~$A_\ell$ storing all suffix array entries~$j$ with $\LCP[j] = \ell$, for each~$\ell$ with $\varT \le \ell \le \max_k \LCP[k]$.
To compute the references, we sequentially scan the arrays in decreasing order,
	starting with the array that stores the suffixes with the maximum \LCP{} value.
	On substituting a substring~$T[\SA[i]..\SA[i]+\LCP[i]-1]$ with the reference~$(\SA[i-1],\LCP[i])$, 
	we update the \LCP{} array (instead of updating the keys in the heap).
	We set $\LCP[ {\ISA[ {\SA[i]+k} ]} ] \gets 0$ for every $1 \le k \le \LCP[i]-1$ (deletion), and
	$\LCP[j] \gets \min\tuple{\LCP[j], \SA[i]-\SA[j]}$ for every $j$ with $\ISA[ {\SA[i]-\LCP[i]} ] \le j < i$ (decrease key).
	Unlike the heap implementation, we do \emph{not} delete an entry from the arrays.
	Instead, we look up the current \LCP{} value of an element when we process it:
	Assume that we want to process~$A_\ell[i]$.
	If $\LCP[A_\ell[i]] = \ell$, then we proceed as above. 
	Otherwise, we have updated the \LCP{} value of the suffix starting at position $A_\ell[i]$ to the value $\ell' := \LCP[A_\ell[i]] < \ell$.
	In this case, we append $A_\ell[i]$ to $A_{\ell'}$ (if $\ell' < \varT$, we do nothing), and skip computing the reference for $A_\ell[i]$.
	By doing so, we either omit the substring~$A_\ell[i]$ if $\ell' < \varT$, or delay the processing of the value $A_\ell[i]$.
	A suffix array entry gets delayed at most once, analogously to Lemma~\ref{lemDecreaseKey}. In total, the algorithm runs in \Oh{n} time, since it performs basic arithmetic operations on each text position at most twice.

	\block{Decompression}
	We use a heuristic to improve the memory usage.
	The heuristic defers the creation of the lists~$L_i$ storing the 
	text positions that are waiting for the position~$i$ to get decompressed.
	If a reference needs a substring that has not yet been decompressed, we store the reference in a list~$L$.
	By doing so, we have reconstructed at least all substrings that have not been substituted by a reference during the compression.
	Subsequently, we try to decompress each reference stored in~$L$, removing successfully decompressed references from~$L$.
	If we repeat this step, more and more text positions can become restored.
	Clearly, after at most $n$ iterations, we would have restored the original text completely, but this would cost us \Oh{n^2} time.
	Instead, we run this algorithm only for a fixed number of times~$\varA$.
	Afterwards, we mark all not yet decompressed positions in a bit vector~\bv{}, and build a rank data structure on top of~\bv{}.
	Next, we create a list~$L_i$ for each marked text position~$\bv{}.\rank(i)$ as in the original algorithm.
	The difference to the original algorithm is that $L_i$ now corresponds to $\bv{}.\rank(i)$. 
	Finally, we run the original algorithm using the lists~$L_i$ to restore the remaining characters.

	\subsection{LZ78U}\label{secLZ78U}

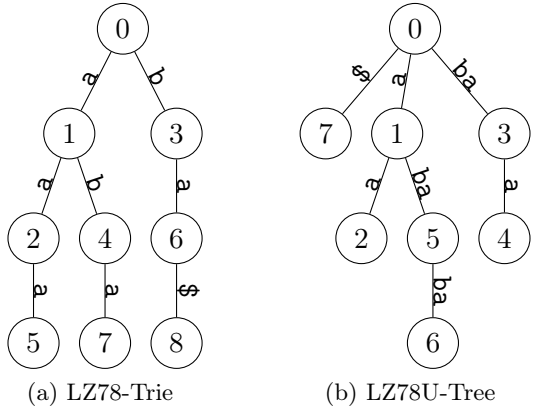
\begin{figure}[t]
	\floatbox[{\capbeside\thisfloatsetup{capbesideposition={right},capbesidewidth=6.75cm}}]{figure}[\FBwidth]
	{\caption{Dictionary trees of LZ78 and LZ78U. 
		LZ78 factorizes our running example into
		\RubyReset{}\RubyCount{a}$\mid$\RubyCount{aa}$\mid$\RubyCount{b}$\mid$\RubyCount{ab}$\mid$\RubyCount{aaa}$\mid$\RubyCount{ba}$\mid$\RubyCount{aba}$\mid$\RubyCount{ba\$},
		where the vertical bars separate the factors.
		The LZ78 factorization is output as tuples:
		{\tt (0,a)}{\tt (1,a)}{\tt (0,b)}{\tt (1,b)}{\tt (2,a)}{\tt (3,a)}{\tt (4,a)}{\tt (6,\$)}.
		This output is represented by the left trie (a).
		The LZ78U factorization of the same text is
		\RubyReset{}\RubyCount{a}$\mid$\RubyCount{aa}$\mid$\RubyCount{ba}$\mid$\RubyCount{baa}$\mid$\RubyCount{aba}$\mid$\RubyCount{ababa}$\mid$\RubyCount{\$}.
		We output it as
		{\tt (0,a)}{\tt (1,a)}{\tt (0,ba)}{\tt (3,a)}{\tt (1,ba)}{\tt (5,ba)}{\tt (0,\$)}.
		This output induces the right tree (b).
	}
	\label{figParsingTries}
}{\centering{

\begin{minipage}[b]{.4\linewidth}
	\subcaption{LZ78-Trie}\label{figParsingTriesLZ78U}
\tikzstyle{labelnode}=[inner sep=0pt,minimum size=0cm,sloped,above,midway,draw=none,font=\ttfamily]
\begin{forest}
for tree={circle,draw, l sep=20pt}
[0	
	[1, edge label={node[labelnode] {a}} 
	[2, edge label={node[labelnode] {a}} 
	[5, edge label={node[labelnode] {a}} 
	]
]
[4, edge label={node[labelnode] {b}} 
[7, edge label={node[labelnode] {a}} 
		]]]
		[3, edge label={node[labelnode] {b}} 
		[6, edge label={node[labelnode] {a}} 
		[8, edge label={node[labelnode] {\$}} 
		]]]]
\end{forest}
\end{minipage}
	\hspace{2em}
\begin{minipage}[b]{.4\linewidth}
\subcaption{LZ78U-Tree}\label{figParsingTries78}
\tikzstyle{labelnode}=[inner sep=0pt,minimum size=0cm,sloped,above,midway,draw=none,font=\ttfamily]
\begin{forest}
for tree={circle,draw, l sep=20pt}
[0	
[7, edge label={node[labelnode] {\$}}  ]
	[1, edge label={node[labelnode] {a}} 
	[2, edge label={node[labelnode] {a}} 
		]
		[5, edge label={node[labelnode] {ba}} 
	[6, edge label={node[labelnode] {ba}} 
]
		]]
		[3, edge label={node[labelnode] {ba}} 
		[4, edge label={node[labelnode] {a}} 
]
]]
\end{forest}
\end{minipage}
	}}\end{figure}

	A factorization $F_1\cdots F_z = T$ is called the \intWort{LZ78U factorization} of $T$ iff 
$F_x := T[i..j+\ell]$ with $T[i..j] = \argmax_{S \in \menge{F_y : y < x} \cup \menge{\epsilon} } \abs{S}$ and
\[
\ell :=
\begin{cases}
	1 \text{~if~} T[i..j+1] \text{~is a unique substring of~} T, \text{otherwise:} \\
1 + \max \menge{\ell \in \N_0 \mid  \forall k = 1,\ldots,\ell \;\nexists c \in \Sigma \setminus \menge{T[j+k+1]} : T[i..j+k]c \text{~occurs in~} T},
\end{cases}
\]
for all $1 \le x \le z$.
Informally, we enlarge an LZ78 factor representing a repeated substring~$T[i..i+\ell-1]$ to $T[i..i+\ell]$ as long as 
the number of occurrences of $T[i..i+\ell-1]$ and $T[i..i+\ell]$ are the same.

Having the LZ78U factorization $F_1,\ldots,F_z$ of~$T$,
we can output each factor $F_x$ as a tuple $(y,S_x)$ such that $F_x = F_y S_x$, 
where $F_y$ ($0 \le y < x$) is the longest previous factor (set $F_0 := \epsilon$) 
that is a prefix of $F_x$, and $S_x$ is the suffix determined by the factorization. 
We call $y$ the \intWort{referred index} and $S_x$ the \intWort{factor label} of the $x$-th factor.
Transforming the factors to this output induces a dictionary tree, called the \intWort{LZ78U-tree}, in which
\begin{itemize}
	\item every node corresponds to a factor, 
	\item the parent of a node~$v$ corresponds to the referred index of~$v$, and 
	\item the edge between the node of the $x$-th factor and its parent is labeled with the factor label of the $x$-th factor.
\end{itemize}
\Cref{figParsingTries} shows a comparison to the LZ78-trie.
By the definition of the factorizations, the LZ78-trie is a subtree of the suffix \emph{trie},
whereas the LZ78U-tree\JO{\footnote{We named the algorithm LZ78U because each factor label represents a {\bf u}nary path of the suffix trie, unless the path leads to leaf.}} is a subtree of the suffix \emph{tree}. 
The latter can be seen by the fact that the suffix tree compacts the unary paths of the suffix trie.
\JO[This fact is the foundation of the algorithm we present in the following.
It builds the LZ78U-tree on top of the suffix tree.
The algorithm is an easier computable variant of the LZ78 algorithms in~\cite{lzciss,lzcics}.
]{This fact is the foundation of two algorithms we will subsequently show.
Both algorithms build the LZ78U-tree on top of the suffix tree.
They are easier computable variants of the LZ78 algorithms in~\cite{lzciss,lzcics}.
We present
\begin{enumerate}[(1)]
	\item a streaming algorithm with $n \lg n + \abs{\ST{}}$ bits of working space, and \label{it78UA}
	\item an offline algorithm  using $\abs{\ST} + n + z (\lg (2n) + \lg z) + 2z\lg n + \oh{n}$ bits of working space. \label{it78UB}
\end{enumerate}
}

\JO[\block{The Algorithm}]{\block{(\ref{it78UA}) Streaming Algorithm}}
The internal suffix tree nodes can be mapped to the pre-order numbers $[1..n]$ injectively by using rank/select data structures on the suffix tree topology.
This allows us to use $n \lg n$ bits for storing a factor id in each internal suffix tree node.
To this end, we create an array~$R$ of $n \lg n$ bits.
All elements of the array are initially set to zero.
In order to compute the factorization, we scan the text from left to right.
Given that we are at text position~$i$, we locate the suffix tree leaf~$\ell \gets \leafselect[i]$ corresponding to the $i$-th suffix.
Let $p \gets \parent[\ell]$ be $\ell$'s parent. 
\begin{itemize}
	\item If $R[p] \not= 0$, then $p$ corresponds to a factor~$F_x$.
		Let $c$ be the first character of the edge label $\lambda(p,\ell)$.
		The substring $F_x c$ occurs exactly once in~$T$, otherwise $\ell$ would not be a leaf.
		Consequently, we output a factor consisting of the referred index $R[p]$ and the string label~$c$.
		We further increment $i$ by the string depth of $p$ plus one.
	\item Otherwise, using level ancestor queries, we search for the highest node~$v \gets \levelanc[\ell,d]$ with $R[v] = 0$ 
		on the path between the root (exclusively) and~$p$ (iterate over the depth~$d$ starting with zero).
		We set $R[v] \gets z+1$, where $z$ is the current number of computed factors.
		We output the referred index $R[\parent[v]]$ and the string $\lambda(\parent[v],v)$. 
		Finally, we increment $i$ by the string depth of~$v$.
\end{itemize}
Since level ancestor queries can be answered in constant time, we can compute a factor in time linear to its length.
Summing over all factors we get linear time overall.
\JO[We use $n \lg n + \abs{\ST{}}$ bits of working space.]{}
\JO{The code that computes the LZ78U factorization with variant (1) is shown in the appendix.}

	\JO{\block{(\ref{it78UB}) Offline Algorithm}}
	\J[\section{LZ78U Offline Algorithm}]{Instead of directly constructing the array~$R$ that is necessary to determine the referred indices,
we create a list~$F$ storing the marked LZ-trie nodes, and a bit vector~\bv{} marking the internal nodes belonging to the LZ-tree.
Initially, only the root node is marked in \bv{}.
Let $i$, $p$ and $\ell$ be defined as in the above tree traversal.
If $\bv{}[p]$ is set, then we append $\ell$ to $F$ and increment~$i$ by one.
Otherwise, by using level ancestor queries, we search for the highest node~$v$ with $\bv{}[v] = 0$ on the path between the root and~$p$.
We set $\bv{}[v] \gets 1$, and append $v$ to $F$.
Additionally, we increment~$i$ by $\abs{\lambda(\parent[v],v)}$.
By doing so, we have computed the factorization.

In order to generate the final output, we augment $\bv{}$ with a rank data structure, 
and create a permutation~$N$ that maps a marked suffix tree node to the factor it belongs.
The permutation~$N$ is represented as an array of $z \lg z$ bits,
where $N[\bv{}.\rank[1](F[x])] \gets x$, for $1 \le x \le z$.
At this point, we no longer need~$F$.
The rest of the algorithm sorts the factors in the factor index order.
To this end, we create an array~$R$ with $z \lg z$ bits to store the referred indices, and an array~$S$ with $z \lg n$ bits to store the factor labels.
To compute~$S$ and~$R$, we scan all marked nodes in~\bv{}:
Since the $x$-th marked node $v$ corresponds to the $N[x]$-th factor, 
we can fill up~$S$ easily:
If $v$ is a leaf, we store the first character of $\lambda(\parent[v],v)$ in $S[N[x]]$; otherwise ($v$ is an internal node), we store the whole string.
Filling~$R$ is also easy if~$v$ is a child of the root: we simply store the referred index~$0$.
Otherwise, the parent~$p$ of~$v$ is not the root; $p$ corresponds to the $y$-th factor, where $y := N[\bv{}.\rank[1](p)]$.

\JO{We get linear running time with the same argument as for (\ref{it78UA}).}
}\J[The algorithm using $\abs{\ST} + n + z (\lg (2n) + \lg z) + 2z\lg n + \oh{n}$ bits of working space, and runs in linear time.]{}

\block{Improved Compression Ratio}
To achieve an improved compression ratio, we factorize the factor labels:
If~$S_x$ is the label of the $x$-th factor~$f_x$,
then we factorize $S_x = G_1 \cdots G_m$ with
$G_j := \argmax_{S \in \menge{F_y : y < x, \abs{F_y} \ge \varT}\cup\Sigma}\abs{S}$ greedily chosen for ascending values of $j$ with $1 \le j \le m$, with a threshold~$\varT \ge 1$.
By doing so, the string~$S_x$ gets partitioned into characters and former factors longer than $\varT$.
The factorization of~$S_x$ is done in~$\Oh{\abs{S_x}}$ time by traversing the suffix tree with level ancestor queries, 
as above (the only difference is that we do not introduce a new factor to the LZ78U factorization).

\begin{table}[t]
	\centerline{\begin{tabular}{l*{12}{r}}
		\toprule
			collection             & $\sigma$  & $\max$ lcp    & $\textup{avg}_{\LCP{}}$ & bwt-runs      & $z$          & $\max_x \abs{F_x}$ & $H_0$      &  $H_3$
\\\midrule
\textsc{hashtag        }           & \num{179} & \num{54075}   & \num{84}                & \num{63014}K  & \num{13721}K & \num{54056}        & \num{4.59} &  \num{2.46} \\
\textsc{pc-dblp.xml    }           & \num{97}  & \num{1084}    & \num{44}                & \num{29585}K  & \num{7035}K  & \num{1060}         & \num{5.26} &  \num{1.43} \\
\textsc{pc-dna         }           & \num{17}  & \num{97979}   & \num{60}                & \num{128863}K & \num{13970}K & \num{97966}        & \num{1.97} &  \num{1.92} \\
\textsc{pc-english     }           & \num{226} & \num{987770}  & \num{9390}              & \num{72032}K  & \num{13971}K & \num{987766}       & \num{4.52} &  \num{2.42} \\
\textsc{pc-proteins    }           & \num{26}  & \num{45704}   & \num{278}               & \num{108459}K & \num{20875}K & \num{45703}        & \num{4.20} &  \num{4.07} \\
\textsc{pcr-cere       }           & \num{6}   & \num{175655}  & \num{3541}              & \num{10422}K  & \num{1447}K  & \num{175643}       & \num{2.19} &  \num{1.80} \\
\textsc{pcr-einstein.en}           & \num{125} & \num{935920}  & \num{45983}             & \num{153}K    & \num{496}K   & \num{906995}       & \num{4.92} &  \num{1.63} \\
\textsc{pcr-kernel     }           & \num{161} & \num{2755550} & \num{149872}            & \num{2718}K   & \num{775}K   & \num{2755550}      & \num{5.38} &  \num{2.05} \\
\textsc{pcr-para       }           & \num{6}   & \num{72544}   & \num{2268}              & \num{13576}K  & \num{1927}K  & \num{70680}        & \num{2.12} &  \num{1.87} \\
\textsc{pc-sources     }           & \num{231} & \num{307871}  & \num{373}               & \num{47651}K  & \num{11542}K & \num{307871}       & \num{5.47} &  \num{2.34} \\
\textsc{tagme          }           & \num{206} & \num{1281}    & \num{26}                & \num{65195}K  & \num{13841}K & \num{1279}         & \num{4.90} &  \num{2.60} \\
\textsc{wiki-all-vital }           & \num{205} & \num{8607}    & \num{15}                & \num{80609}K  & \num{16274}K & \num{8607}         & \num{4.56} &  \num{2.45} \\
\textsc{commoncrawl}               & \num{115} & \num{246266}  & \num{1727}              & \num{45899}K  & \num{10791}K & \num{246266}       & \num{5.37} &  \num{2.78}
			\\\bottomrule
		\end{tabular}
	}\caption{Datasets of size 200MiB. The alphabet size $\sigma$ includes the terminating {\tt \$}-character. 
		The expression $\textup{avg}_{\LCP{}}$ is the average of all LCP values.
		$z$ is the number of LZ77 factors with $\varT=1$.
		The number of runs consisting of one character in \BWT{} is called bwt-runs.
		$H_k$ denotes the $k$-th order empirical entropy.
}
	\label{tableDatasets}
\end{table}

\section{Practical Evaluation}\label{secEvaluation}

\begin{figure}[t]
	\centering{\href{http://tudocomp.org/charter/?example=sea2017.json}{\includegraphics[width=\textwidth]{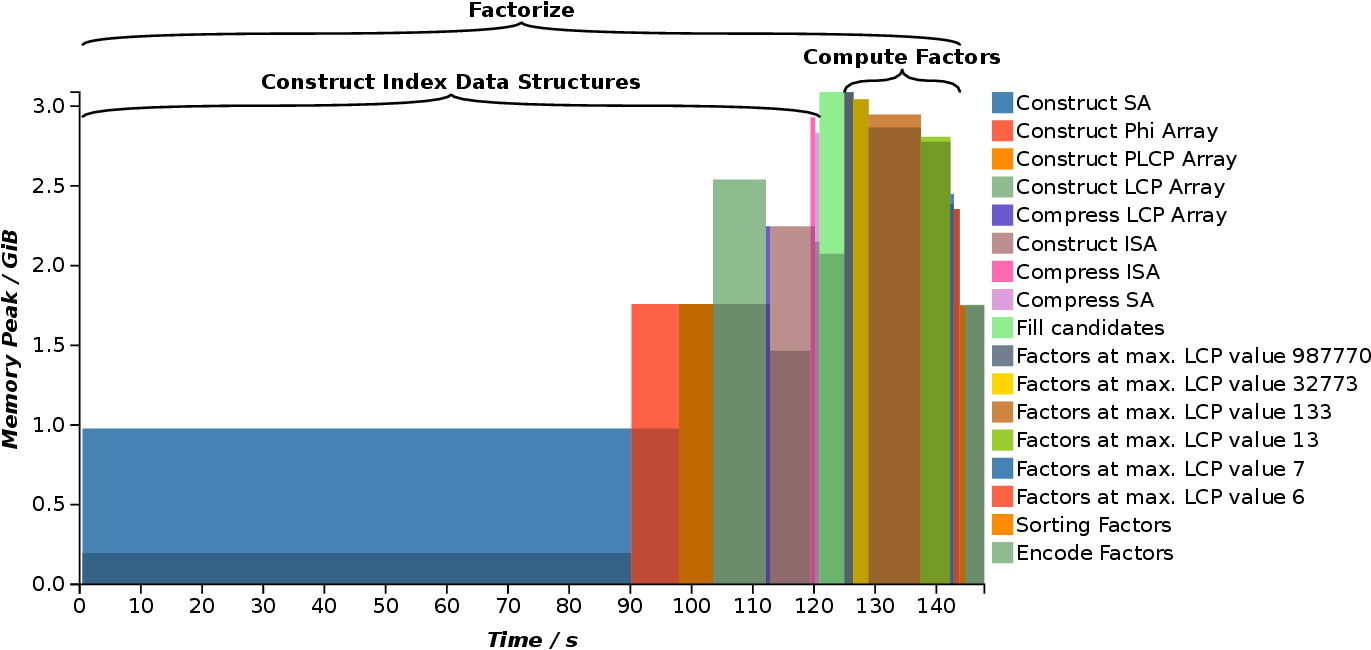}}
}
	\caption{Compression of the collection \textsc{pc-english} with {\tt lcpcomp(coder=sle, threshold=5, comp=arrays)}.
		\SA{} and \LCP{} are built in {\tt delayed} mode.
		Each phase of the algorithm (like the construction of \SA{}) is depicted as a bar in the diagram.
		Each bar is additionally highlighted in a different color with a light and a dark shade.
		The darker part of a phase's bar is the amount of memory already reserved when entering the phase; 
		the lighter part shows the memory peak on top of the already reserved space of the current phase.
		The memory consumption of a phase on termination is equal to the darker bar of the next phase.
		Coherent phases are grouped together by curly braces on the top.
}
\label{figEnglishDiagram}
\end{figure}

\Cref{tableDatasets} shows the text collections used for the evaluation in the tudocomp benchmarks.
We provide a tool that automatically downloads and prepares a superset of the collections used in this evaluation. 
The collections with the prefixes \textsc{pc} or \textsc{pcr} belong to the Pizza\&Chili Corpus\Footnote{PizzaChili}{\url{http://pizzachili.dcc.uchile.cl}}. The Pizza\&Chili Corpus is divided in a real text corpus (\textsc{pc}), and in a repetitive corpus (\textsc{pcr}).
The collection \textsc{hashtag} is a tab-separated values file with five columns (integer values, a hashtag and a title)~\cite{FerraginaPS15}.
The collection \textsc{tagme} is a list of Wikipedia fragments\footnote{\url{http://acube.di.unipi.it/tagme-dataset}}. Finally, we present two new text collections.
The first collection, called \textsc{wiki-all-vital}, consists of the approx.~\num{10000} most vital Wikipedia articles\footnote{\url{https://en.wikipedia.org/wiki/Wikipedia:Vital_articles/Expanded}}. We gathered all articles and processed them with the Wikipedia extractor of TANL~\cite{SIMI16.992} to convert each article into plain text.
The second collection, named \textsc{commoncrawl}, is composed of a random subset of a web crawl\footnote{\url{http://commoncrawl.org}};
this subset contains only the plain texts (i.e., without header and HTML tags) of web sites with ASCII characters.

\block{Setup}
The experiments were conducted on a machine with 32 GB of RAM, an
Intel Xeon CPU~\texttt{E3-1271~v3} and a Samsung SSD {\tt 850 EVO 250GB}.
The operating system was a 64-bit version of Ubuntu Linux~\num{14.04} with the kernel version~3.13.
We used a single execution thread for the experiments.
The source code was compiled using the GNU compiler {\tt g++~6.2.0} with the compile flags {\tt -O3 -march=native -DNDEBUG}.

\block{\lcpcomp{} Strategies}
For \lcpcomp{} we use the heap strategy and the list decompression strategy described in \Cref{seclcpcomp}.
We call them {\tt heap} and {\tt compact}, respectively. 
The strategies described in \Cref{seclcpcompImprovement} are called {\tt arrays} (compression) and {\tt scan} (decompression).
The decompression strategy {\tt scan} takes the number of scans~$\varA$ as an argument.
We encode the remaining substrings of \lcpcomp{} with a static low entropy encoder~{\tt sle}. 
The coder is similar to a Huffman coder, but it additionally treats all $3$-grams of the remaining substrings as symbols of the input.
We evaluated \lcpcomp{} only with the coder~{\tt sle}, since it provided the best compression ratio.
We produced \SA{}, \ISA{} and \LCP{} in the {\tt delayed} mode.

\block{LZ78U Implementation}
We used the suffix tree implementation {\tt cst\_sada} of~SDSL, 
since it provides all required operations like level ancestor queries.

\begin{table}[t]
	{\scriptsize
\begin{verbatim}
pcr_cere.200MB (200.0MiB, sha256=577486b84633ebc71a8ca4af971eaa4e6a91bcddda17f0464ff79038cf928eab)

                       Compressor |     C Time |   C Memory |     C Rate |     D Time |   D Memory | chk |
----------------------------------------------------------------------------------------------------------
                  lz78u(t=5,huff) |     280.2s |     9.2GiB |   12.4643lcpcomp(t=5,heap,compact) |     235.5s |     3.4GiB |    2.8436lcpcomp(t=5,arrays,compact) |     103.1s |     3.2GiB |    2.8505lcpcomp(t=5,arrays,scans(b=25)) |     104.6s |     3.2GiB |    2.8505lzss_lcp(t=5,bit) |      98.5s |     2.9GiB |    4.0530code2 |      16.4s |   230.6MiB |   28.4704huff |       2.7s |   230.5MiB |   28.1072lzw |      14.3s |   480.9MiB |   23.4411lz78 |      13.6s |   480.8MiB |   29.1033bwtzip |      83.6s |     1.7GiB |    6.8688gzip -1 |       2.6s |     6.6MiB |   30.7312gzip -9 |     107.6s |     6.6MiB |   26.2159bzip2 -1 |      13.1s |     9.3MiB |   25.3806bzip2 -9 |      13.8s |    15.4MiB |   25.2368lzma -1 |      12.6s |    27.2MiB |   27.6205lzma -9 |     138.6s |   691.7MiB |    1.9047\end{verbatim}
\vspace{-1em}
}
\caption{Output of the comparison tool for the collection \textsc{pcr-cere}. {\tt C} and {\tt D} denote the compression and decompression phase, respectively.
{\tt b} and {\tt t} are the parameters $\varA$ and $\varT$, respectively.
The tool checks at the last column whether the {\tt sha256}-checksum of the decompressed output matches the input file.}
\label{tableComparetool}
\end{table}

\Cref{figEnglishDiagram} visualizes the execution of \lcpcomp{} with the strategy {\tt arrays} in different phases 
for the collection \textsc{pc-english}.
The figure is generated with the JSON output of {\tt tdc} by the chart visualization application on our website
\url{http://tudocomp.org/charter}.
We loaded the text (200MiB), constructed \SA{} (800MiB, 32 bits per entry), computed \LCP{} (500MiB, 20-bits per entry), computed \ISA{} (700MiB, 28 bits per entry),
and shrunk \SA{} to 700MiB.
Summing these memory sizes gives a memory offset of 1.9GiB when \lcpcomp{} started its actual factorization.
The factorization is divided in LCP value ranges. 
After the factorization, the factors were sorted and finally transformed to a binary bit sequence by {\tt sle}.
Most of the running time was spent on building \SA{}, roughly 1GiB was spent for creating the lists~$L_i$ containing the suffix array entries with an LCP value of~$i$.

Finally, we compare the implemented algorithms of tudocomp with some classic compression programs like {\tt gzip} 
by our comparison tool {\tt compare.py}. The output of the tool is shown in \Cref{tableComparetool}.
The compressor {\tt lzss\_lcp} computes the LZ77 factorization (Def.~\ref{defLZ77}) by a variant of~\cite{Karkkainen2013LTL}.
The compressor {\tt bwtzip} is an alias for the compression pipeline {\tt bwt:rle:mtf:encode(huff)} devised in \Cref{secExampleBWT}.
The programs {\tt bzip2} and {\tt gzip} do not compress the highly repetitive collection \textsc{pcr-cere} as well as 
any of the tudocomp compressors (excluding the plain usage of a coder).
Still, our algorithms are inferior to {\tt lzma -9} in the compression ratio and the decompression speed.
The high memory consumption of {\tt LZ78U} is mainly due to the usage of the compressed suffix tree.

\section{Conclusions}
The framework tudocomp consists of a compression library, the command line executable~{\tt tdc}, a comparison tool,
and a visualization tool.
The library provides classic compressors and standard coders to facilitate building a compressor, or constructing a complex compression pipeline.
Since the library was built with a focus on high modularity, a compression pipeline does not have to get statically compiled.
Instead, the tool {\tt tdc} can assemble a compression pipeline at runtime.
Such a pipeline, given as a parameter to {\tt tdc}, can be adjusted in detail at runtime.

We demonstrated tudocomp's capabilities with the implementation of two new compressors:
\lcpcomp{}, a variant of LZ77, and 
LZ78U, a variant of LZ78.
Both new variants show better compression ratios than their respective originals, but have a higher memory consumption and also slower decompression times. Further research is needed to address these issues.

\block{Future Research}
The memory footprint of \lcpcomp{} could be dropped by exchanging the array implementations of \SA{}, \ISA{} and \LCP{}
with compressed data structures like a compressed suffix array, an inverse suffix array sampling, and a permuted LCP (PLCP) array, respectively.
We are currently investigating a variant that only observes the peaks in the PLCP array to compute the same output as \lcpcomp{}.
If the number of peaks is~$\pi$, then this algorithm needs at most $\pi \lg n$ bits on top of \SA{}, \ISA{} and the PLCP array.

We are optimistic that we can improve the compression ratio of our algorithms 
by adapting sophisticated approaches in how the factors are chosen~\cite{brotli,FarruggiaFV14,FerraginaNV13} and how the factors are finally coded~\cite{DudaTGD15}.

\newpage
\bibliographystyle{plain}

\begin{thebibliography}{10}

\bibitem{brotli}
Jyrki Alakuijala and Zoltan Szabadka.
\newblock {Brotli Compressed Data Format}.
\newblock RFC 7932, 2016.

\bibitem{Benoit2005RTo}
David Benoit, Erik~D. Demaine, J.~Ian Munro, Rajeev Raman, Venkatesh Raman, and
  S.~Srinivasa Rao.
\newblock Representing trees of higher degree.
\newblock {\em Algorithmica}, 43(4):275--292, 2005.

\bibitem{burrows94block}
M.~Burrows and D.~J. Wheeler.
\newblock A block-sorting lossless data compression algorithm.
\newblock Technical Report 124, Digital Equipment Corporation, 1994.

\bibitem{clark96compact}
David~R. Clark.
\newblock {\em Compact Pat Trees}.
\newblock PhD thesis, University of Waterloo, Canada, 1996.

\bibitem{DudaTGD15}
Jarek Duda, Khalid Tahboub, Neeraj~J. Gadgil, and Edward~J. Delp.
\newblock The use of asymmetric numeral systems as an accurate replacement for
  {H}uffman coding.
\newblock In {\em Proc.\ {PCS}}, pages 65--69. {IEEE} Computer Society, 2015.

\bibitem{elias75universal}
Peter Elias.
\newblock Universal codeword sets and representations of the integers.
\newblock {\em IEEE Transactions on Information Theory}, 21(2):194--203, 1975.

\bibitem{FarruggiaFV14}
Andrea Farruggia, Paolo Ferragina, and Rossano Venturini.
\newblock Bicriteria data compression: Efficient and usable.
\newblock In {\em Proc.\ {ESA}}, volume 8737 of {\em LNCS}, pages 406--417.
  Springer, 2014.

\bibitem{FerraginaNV13}
Paolo Ferragina, Igor Nitto, and Rossano Venturini.
\newblock On the bit-complexity of {L}empel-{Z}iv compression.
\newblock {\em {SIAM} J. Comput.}, 42(4):1521--1541, 2013.

\bibitem{FerraginaPS15}
Paolo Ferragina, Francesco Piccinno, and Roberto Santoro.
\newblock On analyzing hashtags in {T}witter.
\newblock In {\em Proc.\ {ICWSM}}, pages 110--119, 2015.

\bibitem{lzciss}
Johannes Fischer, Tomohiro I, and Dominik K{\"{o}}ppl.
\newblock {L}empel-{Z}iv computation in small space {(LZ-CISS)}.
\newblock In {\em Proc.\ CPM}, volume 9133 of {\em LNCS}, pages 172--184.
  Springer, 2015.

\bibitem{Gamma1995}
Erich Gamma, Richard Helm, Ralph Johnson, and John Vlissides.
\newblock {\em Design Patterns: Elements of Reusable Object-oriented Software}.
\newblock Addison-Wesley, first edition, 1995.

\bibitem{gbmp2014sea}
Simon Gog, Timo Beller, Alistair Moffat, and Matthias Petri.
\newblock From theory to practice: Plug and play with succinct data structures.
\newblock In {\em Proc.\ {SEA}}, volume 8504 of {\em LNCS}, pages 326--337.
  Springer, 2014.

\bibitem{GrossiO13}
Roberto Grossi and Giuseppe Ottaviano.
\newblock Design of practical succinct data structures for large data
  collections.
\newblock In {\em Proc.\ {SEA}}, volume 7933 of {\em LNCS}, pages 5--17.
  Springer, 2013.

\bibitem{HolubRS11}
Jan Holub, Jakub Reznicek, and Filip Simek.
\newblock Lossless data compression testbed: {ExCom} and {P}rague corpus.
\newblock In {\em Proc.\ DCC}, page 457. {IEEE} Computer Society, 2011.

\bibitem{jacobson89space}
Guy~Joseph Jacobson.
\newblock Space-efficient static trees and graphs.
\newblock In {\em Proc.\ FOCS}, pages 549--554. IEEE Computer Society, 1989.

\bibitem{Karkkainen2013LTL}
Juha K{\"a}rkk{\"a}inen, Dominik Kempa, and Simon~J. Puglisi.
\newblock Linear time {L}empel-{Z}iv factorization: Simple, fast, small.
\newblock In {\em {Proc.\ CPM}}, volume 7922 of {\em LNCS}, pages 189--200.
  Springer, 2013.

\bibitem{kaerkkaeinen09permuted}
Juha K\"arkk\"ainen, Giovanni Manzini, and Simon~John Puglisi.
\newblock Permuted longest-common-prefix array.
\newblock In {\em Proc.\ CPM}, volume 5577 of {\em LNCS}, pages 181--192.
  Springer, 2009.

\bibitem{kaerkkaeinen06linear}
Juha K\"arkk\"ainen, Peter Sanders, and Stefan Burkhardt.
\newblock Linear work suffix array construction.
\newblock {\em J.\ ACM}, 53(6):1--19, 2006.

\bibitem{lzcics}
Dominik K{\"{o}}ppl and Kunihiko Sadakane.
\newblock {L}empel-{Z}iv computation in compressed space ({LZ-CICS}).
\newblock In {\em Proc.\ DCC}, pages 3--12. {IEEE} Computer Society, 2016.

\bibitem{LarssonM99}
N.~Jesper Larsson and Alistair Moffat.
\newblock Offline dictionary-based compression.
\newblock In {\em Proc.\ {DCC}}, pages 296--305. {IEEE} Computer Society, 1999.

\bibitem{manber93suffix}
Udi Manber and Eugene~W. Myers.
\newblock Suffix arrays: A new method for on-line string searches.
\newblock {\em SIAM J.\ Comput.}, 22(5):935--948, 1993.

\bibitem{runrich}
Wataru Matsubara, Kazuhiko Kusano, Hideo Bannai, and Ayumi Shinohara.
\newblock A series of run-rich strings.
\newblock In {\em Proc.\ LATA}, volume 5457 of {\em LNCS}, pages 578--587.
  Springer, 2009.

\bibitem{cstsada}
Kunihiko Sadakane.
\newblock Compressed suffix trees with full functionality.
\newblock {\em {Theory of Computing Systems}}, 41(4):589--607, 2007.

\bibitem{SIMI16.992}
Maria Simi and Giuseppe Attardi.
\newblock Adapting the tanl tool suite to universal dependencies.
\newblock In {\em Proc.\ LREC}. European Language Resources Association, 2016.

\bibitem{StorerS82}
James~A. Storer and Thomas~G. Szymanski.
\newblock Data compression via textural substitution.
\newblock {\em J. {ACM}}, 29(4):928--951, 1982.

\bibitem{weiner73linear}
Peter Weiner.
\newblock Linear pattern matching algorithms.
\newblock In {\em Proc.\ Annual Symp.\ on Switching and Automata Theory}, pages
  1--11. IEEE Computer Society, 1973.

\bibitem{Welch84}
Terry~A. Welch.
\newblock A technique for high-performance data compression.
\newblock {\em Computer}, 17(6):8--19, 1984.

\bibitem{WilliamsZ99}
Hugh~E. Williams and Justin Zobel.
\newblock Compressing integers for fast file access.
\newblock {\em Comput. J.}, 42(3):193--201, 1999.

\bibitem{witten99managing}
Ian~H Witten, Alistair Moffat, and Timothy~C Bell.
\newblock {\em Managing Gigabytes: Compressing and Indexing Documents and
  Images}.
\newblock Morgan Kaufmann, 2nd edition, 1999.

\bibitem{ziv77universal}
Jacob Ziv and Abraham Lempel.
\newblock A universal algorithm for sequential data compression.
\newblock {\em IEEE Trans.\ Inform.\ Theory}, 23(3):337--343, 1977.

\bibitem{ziv78compression}
Jacob Ziv and Abraham Lempel.
\newblock Compression of individual sequences via variable length coding.
\newblock {\em IEEE Trans.\ Inform.\ Theory}, 24(5):530--536, 1978.

\end{thebibliography}

\FloatBarrier
\newpage
\appendix
\Jvar{}
\begin{table}[t]
	\centerline{\begin{tabular}{l*{8}{r}}
			\toprule
			& \multicolumn{5}{c}{compression} & \multicolumn{3}{c}{decompression} \\ \cmidrule(lr){2-6} \cmidrule(lr){7-9}
			collection               & $\varT$ & \#factors      & ratio   & memory       & time      & $\varA$ & memory        & time \\
	\midrule
			\textsc{hashtag        } & \num{5}     & \num{10088662} & 25.47\% & \num{3179.9} & \num{100} & \num{17} & \num{1726  }  & \num{50}\\
			\textsc{pc-dblp.xml    } & \num{5}     & \num{5547102}  & 14.4 \% & \num{2929.7} & \num{99}  & \num{28} & \num{1993.5}  & \num{65}\\
			\textsc{pc-dna         } & \num{21}    & \num{1091010}  & 26.03\% & \num{2925  } & \num{122} & \num{11} & \num{291.2 }  & \num{8} \\
			\textsc{pc-english     } & \num{5}     & \num{11405635} & 27.66\% & \num{3162  } & \num{123} & \num{25} & \num{792.6 }  & \num{36}\\
			\textsc{pc-proteins    } & \num{10}    & \num{1749917}  & 35.91\% & \num{2900  } & \num{124} & \num{13} & \num{362   }  & \num{11}\\
			\textsc{pcr-cere       } & \num{22}    & \num{236551}   & 2.45 \% & \num{3126  } & \num{113} & \num{6}  & \num{454.2 }  & \num{7}\\
			\textsc{pcr-einstein.en} & \num{8}     & \num{24672}    & 0.1  \% & \num{3288.8} & \num{113} & \num{40} & \num{1777.3}  & \num{47}\\
			\textsc{pcr-kernel     } & \num{6}     & \num{512047}   & 1.51 \% & \num{3356.3} & \num{116} & \num{40} & \num{2129.6}  & \num{37}\\
			\textsc{pcr-para       } & \num{22}    & \num{388195}   & 3.27 \% & \num{3060.8} & \num{117} & \num{6}  & \num{402.3 }  & \num{7}\\
			\textsc{pc-sources     } & \num{5}     & \num{8922703}  & 23.36\% & \num{3271  } & \num{98}  & \num{30} & \num{1019.6}  & \num{36}\\
			\textsc{tagme          } & \num{5}     & \num{10986096} & 27.29\% & \num{2987.7} & \num{113} & \num{25} & \num{985.4 }  & \num{41}\\
			\textsc{wiki-all-vital } & \num{5}     & \num{13338470} & 32.46\% & \num{3163  } & \num{117} & \num{27} & \num{870.4 }  & \num{45}\\
			\textsc{commoncrawl }    & \num{4}     & \num{8402041}  & 21.49\% & \num{3254.6} & \num{101} & \num{36} & \num{1206.11} & \num{41}\\
			\bottomrule
		\end{tabular}
	}\caption{Compression and decompression with the \lcpcomp{} strategies {\tt arrays} and {\tt scan}, for fixed parameters $\varT$ and $\varA$. 
		For each collection we chose the $\varT$ with the best compression ratio.
		Having $\varT$ fixed, we chose the $\varA \le 40$ with the shortest decompression running time.
	}
	\label{tableBestCompression}
\end{table}

\FloatBarrier
\section{LZ78U Code Snippet}
\begin{figure}[H]
\begin{lstlisting}
void factorize(TextDS<>& T, SuffixTree& ST, std::function<void(size_t begin, size_t end, size_t ref)> output){
 typedef SuffixTree::node_type node_t;
 sdsl::int_vector<> R(ST.internal_nodes,0,bits_for(T.size() * bits_for(ST.cst.csa.sigma) / bits_for(T.size())));
 size_t pos = 0, z = 0;
 while(pos < T.size() - 1) {
  node_t l = ST.select_leaf(ST.cst.csa.isa[pos]);
  size_t leaflabel = pos;
  if(ST.parent(l) == ST.root || R[ST.nid(ST.parent(l))] != 0) {
   size_t parent_strdepth = ST.str_depth(ST.parent(l));
   output(pos + parent_strdepth, pos + parent_strdepth + 1, R[ST.nid(ST.parent(l))]);
   pos += parent_strdepth+1;
   ++z;
   continue;
  }
  size_t d = 1;
  node_t parent = ST.root;
  node_t node = ST.level_anc(l, d);
  while(R[ST.nid(node)] != 0) {
   parent = node;
   node = ST.level_anc(l, ++d);
  }
  pos += ST.str_depth(parent);
  size_t begin = leaflabel + ST.str_depth(parent);
  size_t end = leaflabel + ST.str_depth(node);
  output(begin, end, R[ST.nid(ST.parent(node))]);
  R[ST.nid(node)] = ++z;
  pos += end - begin;
 }
}
\end{lstlisting}
	\caption{Implementation of the LZ78U algorithm streaming the output}
\end{figure}

\begin{table}[h]
	\centerline{\begin{tabular}{l*{3}{r}l*{2}{r}}
	\toprule
	& \multicolumn{3}{c}{compression} & \multicolumn{3}{c}{decompression} \\ \cmidrule(lr){2-4} \cmidrule(lr){5-7}
	compressor                                                                                                             & memory           & output size     & time          & \multicolumn{1}{c}{strategy}                  & memory           & time\\
	\midrule
	\multicolumn{5}{l}{\textbf{external programs}}\\
	{\tt gzip -1  }                                                                                                        &6.6     & 61.3  & \num{2.19}    &                           & 6.6    & \num{1.045}\\
	{\tt bzip2 -1 }                                                                                                        &9.3     & 55.4  & \num{14.455}  &                           & 8.6    & \num{4.7}\\
	{\tt lzma  -1 }                                                                                                        &27.2    & 46.7  & \num{9.395}   &                           & 19.7   & \num{2.37}\\
	{\tt gzip -9  }                                                                                                        &6.6     & 53.4  & \num{6.86}    &                           & 6.6    & \num{0.97}\\
	{\tt bzip2 -9 }                                                                                                        &15.4    & 50.7  & \num{14.78}   &                           & 11.7   & \num{4.955}\\
	{\tt lzma  -9e}                                                                                                        &691.7   & 29.4  & \num{104.375} &                           & 82.7   & \num{1.56}\\
\multicolumn{5}{l}{\textbf{tudocomp algorithms}}\\
{\tt encode(sle)}                                                                                                        &265.2   & 137.7 & \num{24.145}  &                           & 30.6   & \num{10.095} \\
{\tt encode(huff) }                                                                                                        &230.4   & 135   & \num{5.7}     &                           & 30.4   & \num{9.045}  \\
{\tt bwtzip       }                                                                                                        &1730.6  & 43.7  & \num{83.035}  &                           & 1575   & \num{21.44}  \\
{\tt lcpcomp($\varT=5$,heap)}                                                                                          &3598.9  & 44.1  & \num{228.055} & {\tt compact            } & 6592.2 & \num{33.24}  \\
{\tt lcpcomp($\varT=22$,heap)}                                                                                         &3161.7  & 58.5  & \num{175.21}  & {\tt compact            } & 3981.2 & \num{14.065} \\
\multirow{3}{*}{{\tt lcpcomp($\varT=5$,arrays)} $\left. \vphantom{\begin{tabular}{c}3\\3\\3\end{tabular}}\right\{$ }
                                                                                                                           &        &       &               & {\tt scan($\varA={6}$) } & 4930   & \num{43.1}   \\
                                                                                                                           &3354.2  & 44.3  & \num{107.34}  & {\tt scan($\varA={25}$)} & 2584.5 & \num{33.995} \\
                                                                                                                           &        &       &               & {\tt scan($\varA={60}$)} & 1164.8 & \num{38.925} \\
\multirow{3}{*}{{\tt lcpcomp($\varT=22$,arrays)} $\left. \vphantom{\begin{tabular}{c}3\\3\\3\end{tabular}}\right\{$ }
                                                                                                                           &        &       &               & {\tt scan($\varA={6}$)}  & 1308   & \num{10.925} \\
												                                                                           &2980.6  & 58.5  & \num{109.245} & {\tt scan($\varA={25}$)} & 520.9  & \num{11.265} \\
			                                                                                                               &        &       &               & {\tt scan($\varA={60}$)} & 368.7  & \num{15.635} \\
{\tt lzss(bit)                   }                                                                                         &2980.4  & 60.2  & \num{108.59}  &                           & 230.6  & \num{6.045}  \\
{\tt lz78(bit)                  }                                                                                          &480.8   & 83.1  & \num{17.96}   &                           & 254.9  & \num{11.46}  \\
{\tt lzw(bit)                         }                                                                                    &480.8   & 70.3  & \num{18.97}   &                           & 663.1  & \num{7.05}   \\
	\bottomrule
\end{tabular}
	}\caption{Evaluation of external compression programs and algorithms of the tudocomp framework on the collection \textsc{commoncrawl}.
	}
	\label{tablecommoncrawl}
\end{table}
\section{More Evaluation}
In this section, the execution time is measured in second, and all data sizes are measured in mebibytes (MiB).
In~\Cref{tableBestCompression}, we selected the $\varT$ with the best compression ratio and the $\varA$ with the shortest decompression time.
Although $\varT$ and $\varA$ tend to correlate with the compression speed and decompression memory, respectively,
selecting values for $\varT$ and $\varA$ that yield a good compression ratio or a fast decompression speed seems difficult.

In \Cref{tablecommoncrawl}, we fixed two values of $\varT$ and three values of $\varA$.
The compression ratio of the strategies {\tt heap} and {\tt arrays} differ slightly,
since the \lcpcomp{} compression scheme does not specify a tie breaking rule for choosing a longest repeated substring.

\Cref{figNumFactors} compares the number of factors of {\tt lzss\_lcp} with {\tt lcpcomp}'s {\tt arrays} strategy on all aforementioned datasets.
We varied the threshold~$\varT$ from 4 up to 22 and measured for each~$\varT$ the number of created factors.
In all cases, \lcpcomp{} produces less factors than {\tt lzss\_lcp} with the same threshold.

\begin{figure}[h]
	\centering{\includegraphics[width=0.65\textwidth]{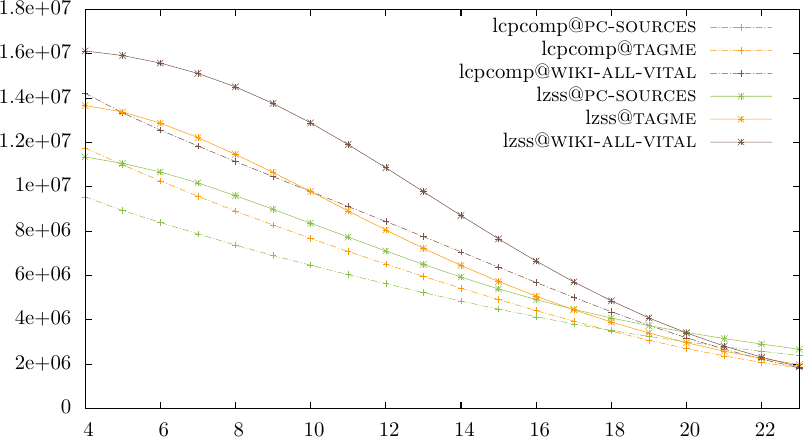}
		\includegraphics[width=0.65\textwidth]{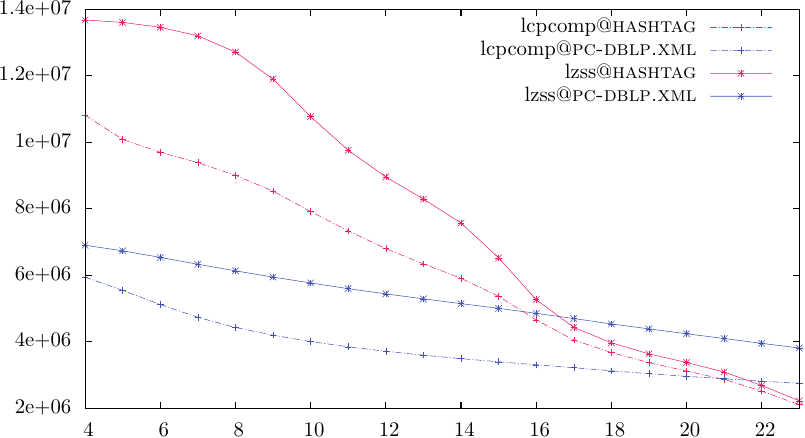}
		\includegraphics[width=0.65\textwidth]{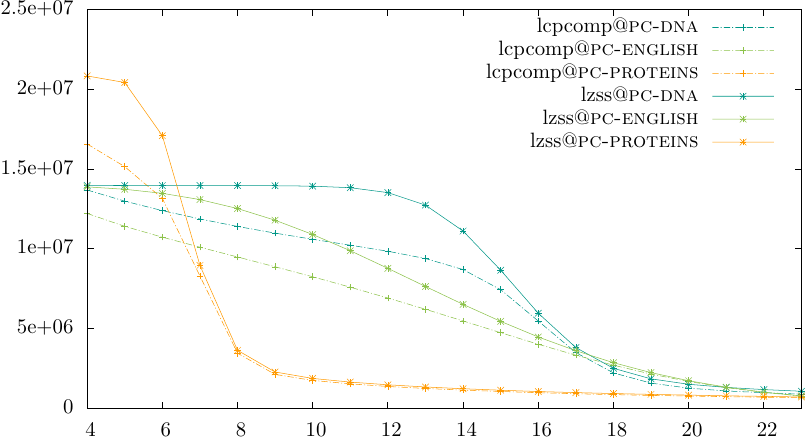}
		\includegraphics[width=0.65\textwidth]{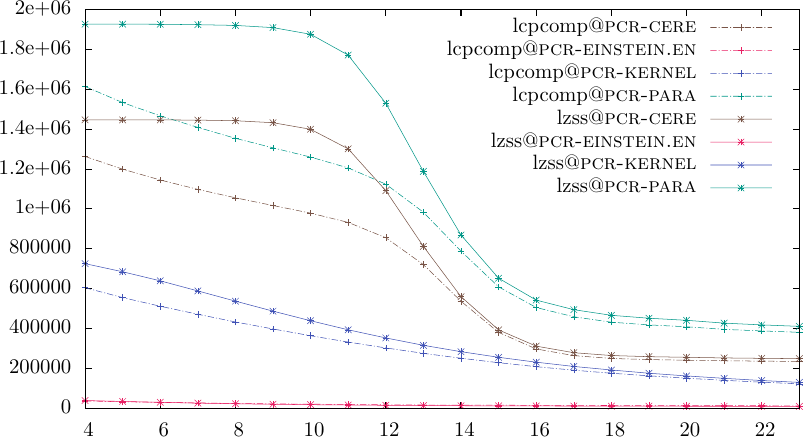}
	}
	\caption{Number of factors ($y$-axis) of \lcpcomp{} and LZ77 on varying the given threshold~$\varT$ ($x$-axis).}
	\label{figNumFactors}
\end{figure}

\FloatBarrier
\newpage

\section{LZ78U Pseudo Codes}

\begin{algorithm}[H]
	\footnotesize
\DontPrintSemicolon $\ST \gets $ suffix tree of $T$ \;
	$R \gets $ array of size~$n$ \tcp*{maps internal suffix tree nodes to LZ trie ids} 
	initialize R with zeros \;
	$pos \gets 1$ \tcp*{text position}
	$z \gets 0$ \tcp*{number of factors}
	\While{$pos \le \abs{T}$}{$\ell \gets \leafselect[ { \ISA[pos] } ] $ \;
\If{$R[\parent[\ell]] \neq 0$ \textbf{\textup{or}} $\parent[\ell] = \textup{root}$ }{output the first character of $\lambda(\parent[\ell],\ell)$ \;
			output referred index $R[\parent[node]]$ \;
			$z \gets z + 1$ \;
			$pos \gets pos + \strdepth[parent] + 1$ \;
		}
		\Else{$d \gets 1$ \tcp*{the current depth}
			\While{$R[\levelanc[\ell,d]] \not= 0$}{$d \gets d + 1$ \;
				$pos \gets pos + \abs{\lambda(\levelanc[\ell,d-1], \levelanc[\ell,d])}$ \;
			}
			$node \gets \levelanc[\ell,d]$ \;
			$z \gets z + 1$ \;
			$R[node] \gets z$ \;
			output string $\lambda(\parent[node],node)$ \;
			output referred index $R[\parent[node]]$ \;
			$pos \gets pos + \abs{\lambda(\parent[node], node)}$ \;
		}
	}
	\caption{Streaming LZ78U}
\label{algo78STSpeed}
\end{algorithm}
\begin{algorithm}[H]
	\footnotesize
\DontPrintSemicolon $\ST{} \gets $ suffix tree of $T$ \; 
$pos \gets 1$ \;
$\bv{} \gets $ bit vector of size $n$ \tcp*{marking the ST nodes belonging to the LZ-trie}
$F \gets $ list of integers \tcp*{storing the LZ-trie nodes in the order when they got explored}
$node \gets \text{root of~} \ST$ \;
\While{$pos \le \abs{T}$}{$node \gets \child[node, {T[pos]} ]$ \tcp*{use \levelanc{} to get \Oh{1} time}
	$pos \gets pos + (\isleaf[node] \text{~?~} 1 : \abs{\lambda(\parent[node], node)}$ \;
	\If{$\isleaf[node] $ \textup{\textbf{or}} $\bv{}[node] = 0$}{$\bv{}[node] \gets 1$ \;
		$F.\text{append}(node)$ \;
		$node \gets \text{root of~} \ST$ \;
	}
}
$\text{add\_rank\_support}(\bv{}) $ \;

$N \gets $ array of length $z$ \tcp*{stores for each marked ST node to which factor it belongs}
\lFor{$1 \le x \le z$}{$N[\bv{}.\rank[1](F[x])] \gets x$ 
}
$F \gets $ integer array of size $z$ \tcp*{storing the referred indices}
$S \gets $ string array of size $z$  \tcp*{storing the string of each factor}

\For{$1 \le x \le z$}{$node \gets \bv{}.\rank[1](x)$ \;
	\lIf{\isleaf[node]}{$S[N[x]] \gets $ first character of $\lambda(\parent[node], node)$
	}
	\lElse{$S[N[x]] \gets \lambda(\parent[node], node)$ 
	}
	\lIf{\parent[node] = \text{root}}{$F[N[x]] \gets 0$ 
	}
	\lElse{$F[N[x]] \gets N[\bv{}.\rank[1](\parent[node])]$ 
	}
}
\Return{(F,S)}
\caption{Computing LZ78U memory-efficiently}
\label{algo78STMemory}
\end{algorithm}

\end{document}